\newtheorem{theorem}{Theorem}[section]
\newtheorem{proposition}[theorem]{Proposition}
\newtheorem{example}[theorem]{Example}
\theoremstyle{remark}
\newtheorem{remark}{\bf Remark}
\newtheorem*{acknowledgement}{\bf Acknowledgements}
\newcommand{\be}{\begin{equation}}
\newcommand{\ee}{\end{equation}}
\newcommand{\tr}{{\rm tr}}
\newcommand{\non}{{\nonumber}}
\newcommand{\bea}{\begin{eqnarray}}
\newcommand{\eea}{\end{eqnarray}}
\newcommand{\bean}{\begin{eqnarray*}}
\newcommand{\eean}{\end{eqnarray*}}
\newcommand{\vol}{\rm vol}
\numberwithin{equation}{section}
\begin{document}
\title{\bf The largest eigenvalue distribution of the\\
Laguerre unitary ensemble}
\author{Shulin Lyu\thanks{lvshulin1989@163.com} and Yang Chen\thanks{yangbrookchen@yahoo.co.uk}\\
 Department of Mathematics, University of Macau,\\
Avenida da Universidade, Taipa, Macau, China\\}


\date{}

\maketitle
\begin{abstract}
We study the probability that all eigenvalues of the Laguerre unitary ensemble of $n$ by $n$ matrices are in $(0,t)$, i.e., the largest eigenvalue distribution.
Associated with this probability, in the ladder operator approach for orthogonal polynomials, there are recurrence coefficients, namely $\alpha_n(t)$ and $\beta_n(t),$
as well as three auxiliary quantities, denoted by $r_n(t),~R_n(t)$ and $\sigma_n(t).$
We establish the second order differential equations for both $\beta_n(t)$ and $r_n(t).$
By investigating the soft edge scaling limit when $\alpha=O(n)$ as $n\rightarrow\infty$ or $\alpha$ is finite,
we derive a $P_{II},$ the $\sigma$-form, and the asymptotic solution of the probability.
In addition, we develop differential equations for orthogonal polynomials $P_{n}(z)$
corresponding to the largest eigenvalue distribution of LUE and GUE with $n$ finite or large.
For large $n,$ asymptotic formulas are given near the singular points of the ODE.
Moreover, we are able to deduce a particular case of Chazy's equation for $\varrho(t)=\Xi'(t)$
with $\Xi(t)$ satisfying the $\sigma$-form of $P_{IV}$ or $P_V.$
\end{abstract}

\section{Introduction}
A unitary ensemble is well defined for Hermitian matrices $M=(M_{ij})_{n\times n}$ with probability density
\bea\label{Hermitianensemble}
p(M)dM\propto e^{-\tr\;v(M)}\vol(dM),\qquad\vol(dM)=\prod\limits_{i=1}^{n}dM_{ii}\prod\limits_{1\leq j<k\leq n}d(ReM_{jk}) d (Im M_{jk}).
\eea
Here $v(M)$ is a matrix function \cite{Higham2008} defined via Jordan canonical form and $\vol(dM)$ is called the volume element \cite{Hua1963}.
The joint probability density function of the eigenvalues $\{x_j\}_{j=1}^n$ of this unitary ensemble is given in \cite{Mehta2004} by
\begin{subequations}\label{HermiteProb}
\begin{equation}
\frac{1}{D_n(a,b)}\;\frac{1}{n!}{\prod\limits_{1\leq j<k\leq n}|x_{k}-x_{j}|}^{2}\prod\limits^{n}_{j=1}w(x_j),
\end{equation}
where $D_n(a,b)$ is the normalization constant which reads
\begin{equation}\label{DnLUE}
D_n(a,b)=\frac{1}{n!}\int_{[a,b]^{n}}{\prod\limits_{1\leq j<k\leq n}|x_{k}-x_{j}|}^{2}\prod\limits^{n}_{j=1}w(x_j)dx_{j},
\end{equation}
\end{subequations}
and $w(x)=e^{-v(x)}$
is a positive weight function supported on $[a,b]$ with finite moments
$$\mu_k:=\int_a^b x^{k}w(x)dx,\qquad k=0,1,2,\cdots.$$
It is shown, in \cite{Mehta2004}, that $D_n(a,b)$ can be evaluated as the determinant of the Hankel (or moment) matrix, that is,
$$D_n(a,b)=\det\left(\mu_{i+j}\right)_{i,j=0}^{n-1}.$$
A unitary ensemble is called the Laguerre unitary ensemble (LUE) if in \eqref{Hermitianensemble}
$$v(x)=x-\alpha\ln x,$$
or, what amounts to the same thing, in \eqref{HermiteProb}
$$w(x)=x^\alpha e^{-x},\qquad x\in[0,\infty),\quad\alpha>0.$$
A special case of LUE is $M=XX^*$ and $\alpha=p-n,$
where $X=X_1+iX_2$ is an $n\times p$ $ (n\leq p)$ random matrix with each element of $X_1$ and $X_2$ chosen independently
as a Gaussian random variable, see \cite{Forrester1993, ForresterWitte2007, Goodman1963, James1964}.

Denote by $\mathbb{P}(n,t)$ the probability that the largest eigenvalue in LUE is not larger than $t,$ then
\bea\label{LargesteigenvalueProb}
\mathbb{P}(n,t)&=&\frac{D_n(t)}{D_n(0,\infty)},\nonumber
\eea
where $D_n(t):=D_n(0,t).$
Tracy and Widom \cite{Tracy1994+} have obtained the Jimbo-Miwa-Okamoto (J-M-O) $\sigma$-form \cite{Jimbo1981, Okamoto1981} of $P_V$ for
$$\sigma_n(t):=t\frac{d}{d t}\ln\mathbb{P}(n,t)$$
by making use of the Fredholm determinant. Basor and Chen \cite{BasorChen2009} have derived the same $\sigma$-form by studying the Hankel determinant $D_n(t)$
with the help of the ladder operators related to orthogonal polynomials.
In their work, another four quantities associated with $\mathbb{P}(n,t)$ are considered,
i.e. $\alpha_n(t),~\beta_n(t),~r_n(t)$ and $R_n(t),$ and the relationships between them are established.
In addition, a $P_V$ is derived for $R_n(t)$ (or $\alpha_n(t)$).
Based on these results, we obtain in this paper the second order differential equation for $\beta_n(t)$ as well as $r_n(t).$

The soft edge scaling limit of the smallest eigenvlue distribution on $(t,\infty)$ in LUE with $\alpha=\mu n=O(n)$ and
$t=\left(\sqrt{\mu +1}-1\right)^2 n-\frac{\left(\sqrt{\mu +1}-1\right)^{4/3}}{(\mu +1)^{1/6}}n^{1/3}s$
is analyzed in \cite{PerretSchehr2015}.
Concerning the largest eigenvalue distribution, we show that for $\alpha=O(n)$ or finite,
and
$$
t=c_1 n+c_2 n^{1/3}s,\qquad\sigma(s):=\frac{c_2}{c_1}\lim\limits_{n\to\infty}n^{-2/3}\sigma_{n}(t)$$
where
$$c_1=\left(\sqrt{\mu +1}+1\right)^2,\qquad c_2=\frac{\left(\sqrt{\mu +1}+1\right)^{4/3}}{(\mu +1)^{1/6}},
\qquad\mu=\begin{cases}
   \frac{\alpha}{n},       & \quad \alpha=O(n)\\
  0,  & \quad \alpha\;\text{is finite}\\
\end{cases},$$
the aforementioned $\sigma$-form of $P_V$ reduces down to the same $\sigma$-form of $P_{II}$ as presented in \cite{PerretSchehr2015}.
The $P_{V},$ the ODEs for $\beta_n(t)$ and $r_n(t)$ can likewise be reduced to a $P_{II}.$
According to the ODE for $\sigma(s),$ we are able to provide the behavior of $\mathbb{P}(n,t)$
for large $n$ when $s\rightarrow\infty$ or $s\rightarrow-\infty.$

By means of the ladder operators valid for the orthogonal polynomials $P_{n}(z)$ associated with the general weight function $w(x)=e^{-v(x)},$
we deal with our problem and the largest eigenvalue distribution of GUE, and show that
the corresponding $\phi_n(z):=e^{-v(z)/2}P_n(z)$ satisfy different second order ODEs for finite $n$ but the same one for large $n.$
In the case of large $n,$ we develop the asymptotic behavior and Taylor expansion of $\phi_n(z)$ near the singular points of the corresponding ODE.
Moreover, a Chazy's equation \cite{Cosgrove2006} is derived for $\varrho(t):=\Xi'(t)$ with $\Xi(t)$ satisfying the $\sigma$-form of $P_{IV}$ or $P_V,$
and this result is applied to different ensembles including the largest eigenvalue distribution of LUE and GUE.

This paper is built up as follows. In Section 2, we introduce the ladder operator technique and restate the results of \cite{BasorChen2009} which are used throughout this paper for further derivation.
We produce the ODE for $\beta_n(t)$ and establish a mapping for $r_n(t)$ and $R_n(t).$
The soft edge scaling limit is studied in Section 3.
The limiting behavior of $\phi_n(z)$ in the neighbourhood of the singular points is then presented in Section 4.
Finally, Section 5 is devoted to a derivation of Chazy's equations.
\section{Preliminaries}\label{finiten}
Monic polynomials $\{P_{n}(x)\}$ orthogonal with respect to a generic weight $w(x)$ on [a,b] is defined by the relations
\bea\label{orthpoly}
\int_{a}^{b}P_m(x)P_{n}(x)w(x)dx= h_n\delta_{mn},\qquad m\geq0,\quad n\geq0,
\eea
where $h_n$ is the square of the $L^2$ norm of the polynomial $P_n(x)$ and
\bea\label{Pn}
P_{n}(x) =x^{n}+ p_1(n)x^{n-1}+\cdots+P_n(0).
\eea
An immediate consequence of the orthogonality relation is the three-term recurrence relation \cite{Szego1939}
\bea\label{recurrence}
xP_{n}(x)= {P_{n+1}(x)}+\alpha_{n}P_{n}(x)+\beta_{n}P_{n-1}(x),\qquad n\geq0
\eea
with initial conditions
$$P_0(x):=1,\qquad\beta_0P_{-1}(x):=0.$$
Substituting (\ref{Pn}) into this relation gives rise to
$$
\alpha_{n}=p_1(n)-p_1(n+1),\qquad n\geq0$$
with $p_1(0):=0,$ which immediately yields
$$\sum\limits^{n-1}_{j=0}\alpha_{j}=-p_1(n).$$
From the recurrence relation \eqref{recurrence} and the orthogonality relation (\ref{orthpoly}), we get
$$\beta_{n}=\frac{h_n}{h_{n-1}}.$$
The lowering and raising ladder operators (see e.g. \cite{ChenIsmail1997}, \cite{ChenIsmail2004} for a precise statement) are
\begin{subequations}\label{ladderoperatorAnBn}
\begin{equation}\label{ladderoperator}
\begin{aligned}
 \left(\frac{d}{dz}+B_n(z)\right)P_n(z)&=\beta_n A_n(z)P_{n-1}(z),\\
 \left(\frac{d}{dz}-B_n(z)-v'(z)\right)P_{n-1}(z)&=-A_{n-1}(z)P_n(z),
\end{aligned}
\end{equation}
with
\begin{equation}\label{AnBn}
\begin{aligned}
A_n(z)&=\left .\frac{P_n^2(y)w(y)}{h_n(y-z)}\right |_{y=a}^{y=b}+\frac{1}{h_n}\int_{a}^{b}\frac{v'(z)-v'(y)}{z-y} P_{n}^2(y)w(y)dy,\\
B_n(z)&=\left .\frac{P_n(y)P_{n-1}(y)w(y)}{h_{n-1}(y-z)}\right |_{y=a}^{y=b}+\frac{1}{h_{n-1}}\int_{a}^{b}\frac{v'(z)-v'(y)}{z-y} P_{n}(y)P_{n-1}(y)w(y)dy,
\end{aligned}
\end{equation}
\end{subequations}
and $v(z):=-\ln w(z).$ The compatibility conditions ($S_1$), ($S_2$) and ($S_2'$) for the ladder operators, see \cite{ChenIts2010, ForresterWitte2007, Magnus1995}, are given by
\begin{align}
B_{n+1}(z)+B_n(z)&=(z-\alpha_n)A_n(z)-v'(z),\tag{$S_1$}\\
1+(z-\alpha_n)\left(B_{n+1}(z)-B_n(z)\right)&=\beta_{n+1}A_{n+1}(z)-\beta_n A_{n-1}(z),\tag{$S_2$}\\
B_n^2(z)+v'(z)B_n(z)+\sum\limits^{n-1}_{j=0}A_j(z)&=\beta_n
A_n(z)A_{n-1}(z).\tag{$S_2'$}
\end{align}

The discontinuous Laguerre weight
\begin{subequations}\label{discontinuousweightLaguerre}
\begin{align}
w(x)=(A+B\theta(x-t))x^\alpha e^{-x},\qquad A\geq0,\quad A+B\geq0,
\end{align}
where
\begin{equation}
\begin{aligned}
A+B\theta(x-t)=
\begin{cases}
    A+B,       & \quad \text{if } x>t\\
    A,  & \quad \text{if }  x\leq t\\
\end{cases}
\end{aligned}
\end{equation}
\end{subequations}
is investigated in \cite{BasorChen2009}, and the case where $A=0$ and $B=1$ leads to the smallest eigenvalue distribution of LUE on $(t,\infty).$
For our problem at hand, which is the case where $A=1$ and $B=-1,$ it is shown that
\begin{equation}\label{AnBnLUE}
\begin{aligned}
A_{n}(z)&=\frac{R_{n}(t)}{z-t}+\frac{1-R_{n}(t)}{z},\\
B_{n}(z)&=\frac{r_{n}(t)}{z-t}-\frac{r_{n}(t)+n}{z},
\end{aligned}
\end{equation}
with
\bea
R_{n}(t):=-\frac{P^{2}_{n}(t,t)}{h_{n}(t)}t^\alpha e^{-t},\qquad r_{n}(t):=-\frac{P_{n}(t,t)P_{n-1}(t,t)}{h_{n-1}(t)}t^\alpha e^{-t},\nonumber
\eea
where $P_j(t,t):=P_j(z,t)\mid_{z=t}.$ It should be noted that the $t$ dependence through the weight induces $t$ dependence of $P_j(x),$ $h_j$ and their allied quantities. For the sake of brevity, we shall not display the independence on $t$ for latter discussion unless we have to.

By using the compatibility conditions and taking the derivative of the orthogonality relation (\ref{orthpoly})
with respect to $t,$  in addition to a $P_V$ and the $\sigma$-form,
Basor and Chen show that $r_n$ and $R_n$ which are closely related to $\alpha_n$ and $\beta_n$
satisfy a couple of difference equations.
We recall a number of results from \cite{BasorChen2009} and make some remarks for our new observations.
\begin{proposition}\label{derivativerelation}
The relations between $\sigma_n(t):=t\frac{d}{dt}\ln\mathbb{P}(n,t)$ and other quantities are
\begin{align}
\sigma_n&=t\frac{d}{dt}\ln D_{n}=-t\sum\limits^{n-1}_{j=0}R_{j}\non\\
&=n(n+\alpha)+p_1(n)\nonumber\\
&=n(n+\alpha)+tr_n-\beta_n,\label{sigmarnbetan}\\
\sigma_n'&=r_{n},\qquad t\sigma''_n=tr_n'=\beta_{n}'.\non
\end{align}
\end{proposition}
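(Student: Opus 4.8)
The plan is to derive the five stated identities in sequence, beginning from the definition of $\mathbb{P}(n,t)$ and the Hankel determinant factorization, and then extracting the remaining relations from the ladder operators and the compatibility conditions.

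First, since $\mathbb{P}(n,t)=D_n(t)/D_n(0,\infty)$ and the denominator is $t$-independent, the definition $\sigma_n=t\frac{d}{dt}\ln\mathbb{P}(n,t)$ immediately gives $\sigma_n=t\frac{d}{dt}\ln D_n$. To reach $-t\sum_{j=0}^{n-1}R_j$, I would use the standard factorization $D_n=\prod_{j=0}^{n-1}h_j$ and compute each $h_j'$. Because the weight $x^\alpha e^{-x}$ is cut off at the upper endpoint $t$, differentiating $h_n=\int_0^t P_n^2(x)\,x^\alpha e^{-x}\,dx$ by Leibniz produces a boundary term $P_n^2(t,t)\,t^\alpha e^{-t}$ together with an interior integral $\int_0^t \partial_t(P_n^2)\,x^\alpha e^{-x}\,dx$. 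The key observation is that $\partial_t P_n$ has degree at most $n-1$ (the leading coefficient of the monic $P_n$ is $t$-independent), so orthogonality annihilates the interior integral and $h_n'=P_n^2(t,t)\,t^\alpha e^{-t}$. Recognizing $R_n=-P_n^2(t,t)t^\alpha e^{-t}/h_n$ gives $h_n'/h_n=-R_n$, and summing $\frac{d}{dt}\ln D_n=\sum_{j=0}^{n-1}h_j'/h_j$ yields the second identity.

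Next, for $\sigma_n=n(n+\alpha)+p_1(n)$ I would turn to the compatibility condition $(S_1)$ with $v'(z)=1-\alpha/z$ and the explicit partial-fraction forms of $A_n$ and $B_n$. Matching the residues at the two poles $z=t$ and $z=0$ on both sides of $(S_1)$ produces two scalar relations; eliminating the combination $r_n+r_{n+1}$ between them collapses to the single linear identity $tR_n=\alpha_n-(2n+1+\alpha)$. Summing this over $j=0,\dots,n-1$ and invoking $\sum_{j=0}^{n-1}\alpha_j=-p_1(n)$ together with $\sum_{j=0}^{n-1}(2j+1)=n^2$ converts $-t\sum_{j=0}^{n-1}R_j$ into $n(n+\alpha)+p_1(n)$. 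For the fourth expression $n(n+\alpha)+tr_n-\beta_n$ it then suffices to show $p_1(n)=tr_n-\beta_n$; I would obtain this by inserting the monic expansion of $P_n$ into the lowering relation $\left(\frac{d}{dz}+B_n\right)P_n=\beta_n A_n P_{n-1}$ and comparing the coefficient of $z^{n-2}$ in the large-$z$ expansion, using $B_n(z)=-n/z+r_nt/z^2+\cdots$ and $A_n(z)=1/z+R_nt/z^2+\cdots$.

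Finally, the derivative relations follow by differentiating in $t$. Differentiating the orthogonality relation $\int_0^t P_nP_{n-1}\,x^\alpha e^{-x}\,dx=0$ and repeating the degree-plus-orthogonality argument isolates $\frac{d}{dt}p_1(n)=-P_n(t,t)P_{n-1}(t,t)t^\alpha e^{-t}/h_{n-1}=r_n$; combined with $\sigma_n=n(n+\alpha)+p_1(n)$ this gives $\sigma_n'=r_n$, whence $t\sigma_n''=tr_n'$. Differentiating instead the fourth expression gives $\sigma_n'=r_n+tr_n'-\beta_n'$, so comparison with $\sigma_n'=r_n$ forces $tr_n'=\beta_n'$, closing the chain. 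I expect the main obstacle to be the residue bookkeeping in $(S_1)$: using the correct $v'(z)$ of the smooth part across the jump discontinuity and correctly eliminating $r_n+r_{n+1}$ is precisely the step that manufactures the $n(n+\alpha)$ term, and an error there would corrupt every subsequent identity. The degree argument that annihilates the interior integrals is elementary but must be stated with care, since it underlies both $h_n'=P_n^2(t,t)t^\alpha e^{-t}$ and $\frac{d}{dt}p_1(n)=r_n$.
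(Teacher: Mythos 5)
Your proposal is correct, and it reconstructs essentially the standard derivation: the paper itself states this proposition without proof, recalling it from the cited work of Basor and Chen, where precisely your chain of arguments is used --- $h_n'/h_n=-R_n$ via Leibniz plus the degree-and-orthogonality argument, the identity $\alpha_n=2n+1+\alpha+tR_n$ extracted from the residues of $(S_1)$ and summed against $\sum_{j=0}^{n-1}\alpha_j=-p_1(n)$, the large-$z$ coefficient comparison in the lowering relation giving $p_1(n)=tr_n-\beta_n$, and $\frac{d}{dt}p_1(n)=r_n$ from differentiating $\int_0^t P_nP_{n-1}\,x^\alpha e^{-x}\,dx=0$. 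All of your intermediate computations check out (in particular the expansions $B_n(z)=-n/z+r_nt/z^2+\cdots$ and $A_n(z)=1/z+R_nt/z^2+\cdots$, and the cancellation of the $r_n+r_{n+1}$ terms between the two residue equations), so there is no gap to report.
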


\begin{proposition}\label{derivativerelation}
\begin{enumerate}[\rm(a)]
\item\label{Riccati}
The quantities $r_n$ and $R_n$ satisfy the following coupled Riccati equations:
\begin{align}
t r_{n}'&=\left(\frac{1}{R_{n}}+\frac{1}{R_{n}-1}\right)r_{n}^{2}+(2n+\alpha)\frac{R_{n}}{R_{n}-1}r_{n}+n(n+\alpha)\frac{R_{n}}{R_{n}-1},\label{Riccattirn}\\
t R_{n}'&=t R_{n}^2+(2n+\alpha-t)R_{n}+2r_n.\label{RiccatiRn}
\end{align}

\item The difference equations for $r_n(t)$ and $R_n(t)$ read
\bea
&&r_{n+1}+r_{n}=(t-2n-1-\alpha-t R_{n})R_{n},\nonumber\\
&&r_{n}^{2}\left(\frac{1}{R_{n}R_{n-1}}-\frac{1}{R_{n}}-\frac{1}{R_{n-1}}\right)=(2n+\alpha)r_{n}+n(n+\alpha).\nonumber
\eea

\end{enumerate}
\end{proposition}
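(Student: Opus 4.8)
The plan is to handle the two parts by complementary uses of the ladder-operator machinery: the difference equations in (b) come purely from the compatibility conditions, whereas the Riccati equations in (a) need in addition the $t$-differentiation of the orthogonality relation. Throughout I use $v'(z)=1-\alpha/z$ and the explicit partial fractions \eqref{AnBnLUE}.

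\textbf{Difference equations (b).} For the first, I would substitute \eqref{AnBnLUE} into $(S_1)$. Writing $z-\alpha_n=(z-t)+(t-\alpha_n)$ and expanding $(z-\alpha_n)A_n(z)-v'(z)$ into its pole at $z=t$, its pole at $z=0$, and a constant, I match these against $B_{n+1}+B_n$. The $1/(z-t)$ residue gives $r_{n+1}+r_n=(t-\alpha_n)R_n$ and the $1/z$ residue fixes $\alpha_n=2n+1+\alpha+tR_n$; eliminating $\alpha_n$ produces the first difference equation. For the second I would use $(S_2')$, with $\sum_{j=0}^{n-1}A_j(z)=\frac{1}{z-t}\sum_{j<n}R_j+\frac{1}{z}\big(n-\sum_{j<n}R_j\big)$, and compare the two sides pole by pole. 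The double poles are the cheapest: the coefficient of $1/(z-t)^2$ gives $r_n^2=\beta_nR_nR_{n-1}$ and that of $1/z^2$ gives $(r_n+n)(r_n+n+\alpha)=\beta_n(1-R_n)(1-R_{n-1})$. Using the first identity the left side of the target equals $\beta_n(1-R_n-R_{n-1})$, and rewriting $\beta_n(1-R_n-R_{n-1})=\beta_n(1-R_n)(1-R_{n-1})-\beta_nR_nR_{n-1}=(r_n+n)(r_n+n+\alpha)-r_n^2$ collapses it to $(2n+\alpha)r_n+n(n+\alpha)$.

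\textbf{Riccati equations (a).} These involve $d/dt$, so I would first extract the ``time'' relations by differentiating $h_n=\int_0^tP_n^2w\,dx$ and $\int_0^tP_nP_jw\,dx=0$ ($j<n$) in $t$. Since $\partial_tP_n$ has degree $\le n-1$, orthogonality kills the bulk terms and leaves $\frac{d}{dt}\ln h_n=-R_n$ and $\partial_tP_n(z,t)\big|_{z=t}=\big(\sum_{j<n}R_j\big)P_n(t,t)$. Differentiating $\ln(-R_n)=2\ln P_n(t,t)-\ln h_n+\alpha\ln t-t$ and applying the chain rule $\frac{d}{dt}P_n(t,t)=P_n'(t,t)+\partial_tP_n|_{z=t}$ then reduces everything to the \emph{spatial} logarithmic derivative $P_n'(t,t)/P_n(t,t)$. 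To obtain that I would multiply the lowering operator \eqref{ladderoperator} by $z$, using $zB_n(z)=\frac{tr_n}{z-t}-n$ and $zA_n(z)=1+\frac{tR_n}{z-t}$, to get $zP_n'(z)=nP_n+\beta_nP_{n-1}+\frac{t}{z-t}\big(\beta_nR_nP_{n-1}(z)-r_nP_n(z)\big)$. The residue identity $r_nP_n(t,t)=\beta_nR_nP_{n-1}(t,t)$ makes the bracket vanish at $z=t$, so $z\to t$ gives a relation among $tP_n'(t,t)$, $tP_{n-1}'(t,t)$ and the polynomial values; the raising operator gives a second such relation. These form a linear $2\times2$ system for $P_n'(t,t)$ and $P_{n-1}'(t,t)$, which I solve and simplify using $\beta_nP_{n-1}(t,t)/P_n(t,t)=r_n/R_n$ and $\beta_nR_nR_{n-1}=r_n^2$. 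Feeding $P_n'(t,t)/P_n(t,t)$ back into the logarithmic derivative and replacing $\sum_{j<n}R_j$ by $-\sigma_n/t=-\big(n(n+\alpha)+tr_n-\beta_n\big)/t$ via \eqref{sigmarnbetan} yields \eqref{RiccatiRn}. The same computation applied to $\ln(-r_n)=\ln P_n(t,t)+\ln P_{n-1}(t,t)-\ln h_{n-1}+\alpha\ln t-t$, now needing both spatial derivatives from the same system, yields \eqref{Riccattirn}.

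\textbf{Main obstacle.} The crux is that $P_n'(t,t)$ and $P_{n-1}'(t,t)$ do not close on their own: a single ladder relation couples the two, so one must combine the lowering and raising operators and then invoke the algebraic identities $r_n^2=\beta_nR_nR_{n-1}$, $(r_n+n)(r_n+n+\alpha)=\beta_n(1-R_n)(1-R_{n-1})$ together with the $\sigma_n$ relation \eqref{sigmarnbetan} to eliminate them. Keeping the pole bookkeeping straight through this elimination, and verifying that the many cancellations collapse to the stated rational right-hand sides, is where essentially all the labour lies; part (b), by contrast, is a short residue computation.
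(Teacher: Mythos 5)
Your proposal is correct, and the two halves deserve separate comment. The paper itself gives no proof of this proposition --- it is recalled from Basor and Chen \cite{BasorChen2009} --- so the natural benchmark is that derivation. For part (b) you follow it essentially exactly: substituting \eqref{AnBnLUE} into $(S_1)$ and matching the $1/(z-t)$ and $1/z$ residues gives $r_{n+1}+r_n=(t-\alpha_n)R_n$ and $\alpha_n=2n+1+\alpha+tR_n$, the double poles of $(S_2')$ give $r_n^2=\beta_nR_nR_{n-1}$ and $(r_n+n)(r_n+n+\alpha)=\beta_n(1-R_n)(1-R_{n-1})$, and your collapse $\beta_n(1-R_n-R_{n-1})=(2n+\alpha)r_n+n(n+\alpha)$ is the right one-line finish. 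For part (a) your route is genuinely different from, and heavier than, the source: Basor--Chen never touch the endpoint derivatives $P_n'(t,t)$, $P_{n-1}'(t,t)$. They obtain \eqref{RiccatiRn} from $\frac{d}{dt}p_1(n)=r_n$ (so $\alpha_n'=r_n-r_{n+1}$) combined with $\alpha_n=2n+1+\alpha+tR_n$ and the first difference equation, and \eqref{Riccattirn} from $tr_n'=\beta_n'=\beta_n(R_{n-1}-R_n)$ (the latter from $\frac{d}{dt}\ln h_j=-R_j$) together with $\beta_nR_{n-1}=r_n^2/R_n$ and $\beta_n(1-R_n)=(2n+\alpha)r_n+n(n+\alpha)+r_n^2/R_n$ --- a few lines of algebra each. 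Your version instead solves a $2\times2$ linear system for the spatial derivatives, and I checked that it closes: the brackets do vanish at $z=t$ (both residue identities $r_nP_n(t,t)=\beta_nR_nP_{n-1}(t,t)$ and $r_nP_{n-1}(t,t)=R_{n-1}P_n(t,t)$ hold), the system's determinant is $t^2\left(1-r_n^2+\beta_nR_nR_{n-1}\right)=t^2\neq0$ --- a solvability check you should state explicitly, since it is precisely where $r_n^2=\beta_nR_nR_{n-1}$ enters --- and the eliminations, using \eqref{sigmarnbetan} to replace $\sum_{j<n}R_j$ (legitimate and non-circular, as those relations rest only on $(S_2')$ residue matching and $\frac{d}{dt}\ln h_j=-R_j$, not on the Riccati pair), do reproduce both \eqref{RiccatiRn} and \eqref{Riccattirn}. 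What your longer route buys is an explicit formula for $P_n'(t,t)/P_n(t,t)$ in terms of $r_n$, $R_n$, $\beta_n$, which the cheap derivation never produces and which is of independent interest for endpoint analysis of $\phi_n$; what it costs is substantial extra bookkeeping that the two scalar relations $\frac{d}{dt}\ln h_n=-R_n$ and $\frac{d}{dt}p_1(n)=r_n$ render unnecessary.
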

\begin{remark}
The equations in $(b)$ can be rewritten as
\bea
x_{n+1}x_n&=&\frac{y^2_{n}-(2n+\alpha)y_{n}+n(n+\alpha)}{y_{n}^{2}},\nonumber\\
y_n+y_{n-1}&=&-\frac{(-t+2n-1+\alpha)x_{n}-(2n-1+\alpha)}{x^2_{n}-2x_{n}+1},\nonumber
\eea
where $$x_n:= 1-\frac{1}{R_{n-1}},\qquad y_n:=-r_n.$$
This mapping is very similar to (25) in \cite{GrammaticosRamani2014} which leads to discrete Painlev$\acute{e}$ equations.
\end{remark}

\begin{proposition}
\begin{enumerate}[\rm(a)]
\item The recurrence coefficients $\alpha_n$ and $\beta_n$ are expressed in terms of $r_n$ and $R_n$ as follows:
\bean
\alpha_n&=&2n+1+\alpha+t R_n,\label{alphanRn}\\
\beta_{n}&=&\frac{1}{1-R_{n}}\left((2n+\alpha)r_{n}+n(n+\alpha)+\frac{r^{2}_{n}}{R_{n}}\right).\label{betanrnRn}
\eean

\item The following equation holds
\begin{align}\label{odebetanrn}
\left((2n+\alpha)^2 -4\beta_n \right)r_n^{2}+2(2n+\alpha)(n(n+\alpha)-\beta_n)r_n+(n(n+\alpha)-\beta_n)^{2} -(\beta_n')^{2}=0.
\end{align}
\end{enumerate}
\end{proposition}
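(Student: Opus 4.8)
The plan is to read off part (a) from the first compatibility condition together with the definitions of the auxiliary quantities, and then to obtain the single equation \eqref{odebetanrn} of part (b) by eliminating $R_n$ between the Riccati equation \eqref{Riccattirn} and the formula for $\beta_n$.

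First I would establish the expression for $\alpha_n$. Substituting the explicit forms \eqref{AnBnLUE} of $A_n(z)$ and $B_n(z)$, together with $v'(z)=1-\alpha/z$, into $(S_1)$ and writing $z-\alpha_n=(z-t)+(t-\alpha_n)$, every term becomes a partial fraction with simple poles only at $z=t$ and $z=0$, and the constant parts cancel. Equating the residue at $z=t$ gives $r_{n+1}+r_n=(t-\alpha_n)R_n$, while equating the residue at $z=0$ gives $r_{n+1}+r_n+2n+1=\alpha_n(1-R_n)-\alpha$. Substituting the former into the latter makes $r_{n+1}+r_n$ together with the $\alpha_n R_n$ terms cancel, leaving $tR_n+2n+1=\alpha_n-\alpha$, i.e. $\alpha_n=2n+1+\alpha+tR_n$. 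For $\beta_n$ I would first record the identity $r_n^2=\beta_n R_n R_{n-1}$, which is immediate from the definitions of $r_n$ and $R_n$ once $\beta_n=h_n/h_{n-1}$ is used, since the factors of $P_n(t,t)$, $P_{n-1}(t,t)$ and $t^\alpha e^{-t}$ match exactly. Writing $P:=(2n+\alpha)r_n+n(n+\alpha)$ for brevity, the target formula is equivalent to $\beta_n(1-R_n)-r_n^2/R_n=P$; using $\beta_n=r_n^2/(R_nR_{n-1})$, the left side equals $r_n^2\bigl(1/(R_nR_{n-1})-1/R_{n-1}-1/R_n\bigr)$, which is exactly $P$ by the second difference equation recalled in Proposition~2.2(b).

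The core of the argument is part (b), where I would eliminate $R_n$. From the formula for $\beta_n$ I read off two consequences: the relation $r_n^2/R_n=\beta_n(1-R_n)-P$, and, after clearing denominators, the quadratic $\beta_n R_n^2-(\beta_n-P)R_n+r_n^2=0$. I would then feed the first relation into \eqref{Riccattirn}. Writing $1/R_n+1/(R_n-1)=(2R_n-1)/(R_n(R_n-1))$ and replacing $r_n^2/R_n$ by $\beta_n(1-R_n)-P$, the right-hand side telescopes: all explicit dependence on $R_n$ in the denominators cancels and one is left with the strikingly simple identity $\beta_n'=tr_n'=-(2R_n-1)\beta_n-P$, where I have used $\beta_n'=tr_n'$ from Proposition~2.1.

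Finally, this linear relation gives $2\beta_n R_n-(\beta_n-P)=-\beta_n'$, so completing the square in the quadratic $\beta_n R_n^2-(\beta_n-P)R_n+r_n^2=0$ (multiply by $4\beta_n$) yields $(2\beta_n R_n-(\beta_n-P))^2=(\beta_n-P)^2-4\beta_n r_n^2$, that is $(\beta_n')^2=(\beta_n-P)^2-4\beta_n r_n^2$. Expanding with $P=(2n+\alpha)r_n+n(n+\alpha)$ and collecting powers of $r_n$ reproduces \eqref{odebetanrn}. I expect the one genuinely delicate step to be the cancellation in the Riccati equation that produces $\beta_n'=-(2R_n-1)\beta_n-P$; everything else is residue bookkeeping and completing a square, but it is precisely that collapse which turns a relation coupling $r_n$ and $R_n$ into a form from which $R_n$ can be eliminated cleanly.
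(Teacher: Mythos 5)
Your proof is correct: the residue bookkeeping in $(S_1)$ (constant terms cancelling, residues at $z=t$ and $z=0$ giving $r_{n+1}+r_n=(t-\alpha_n)R_n$ and $r_{n+1}+r_n+2n+1=\alpha_n(1-R_n)-\alpha$), the identity $r_n^2=\beta_n R_nR_{n-1}$ combined with the second difference equation of Proposition 2.2(b) for the $\beta_n$ formula, the collapse $\beta_n'=tr_n'=\beta_n(1-2R_n)-P$ with $P:=(2n+\alpha)r_n+n(n+\alpha)$, and the completion of the square in $\beta_nR_n^2-(\beta_n-P)R_n+r_n^2=0$ all check out and reproduce \eqref{odebetanrn} exactly. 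This is in substance the same approach as the source: the paper states this proposition without proof, recalling it from Basor and Chen \cite{BasorChen2009} (whose ladder-operator derivation via $(S_1)$ and $(S_2')$ underlies the very difference equations you invoke, so no circularity arises within the paper's ordering), and your elimination of $R_n$ parallels the one the paper itself sketches in Remark \ref{secondChazyderi}, where the Riccati equation \eqref{Riccattirn}, quadratic in $R_n$, is solved with discriminant $\Delta_n$ and substituted into \eqref{RiccatiRn}.
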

\begin{remark}
Equation (\ref{odebetanrn}) is given at the end of the proof of Thereom 6 in \cite{BasorChen2009}.
Solving for $r_n$ from it, differentiating both sides of the resulting equation with respect to t
and noting that $\beta_{n}'=t r_{n}',$
we establish the differential equation for $\beta_n=\beta_n(t):$
\bea\label{odebetan}
&&t ^2 \left(2 n^2 (\alpha +n)^2 (\alpha +2
   n)^2-8 n (\alpha +n) \left(\alpha ^2+3 n^2+3 \alpha
   n\right) \beta _n+6 (\alpha +2 n)^2 \beta
   _n^2\right.\nonumber\\
&&\left.\qquad-8 \beta _n^3+2 \left((\alpha +2 n)^2-4 \beta _n\right) \beta
   _n'{}^2+\left((\alpha +2 n)^2-4 \beta _n\right)^2 \beta _n''\right)^2\nonumber\\
&&=\left((\alpha +2 n)^4-\alpha ^2 (\alpha +2 n)t-8
   (\alpha +2 n)^2 \beta _n+16 \beta
   _n^2\right)^2 \nonumber\\
&&\qquad\cdot\left(4 \beta _n\left(n (\alpha +n)-\beta _n\right)^2+\left((\alpha +2 n)^2-4 \beta
   _n\right) \beta _n'^2\right).
\eea
\end{remark}

\begin{proposition}
\begin{enumerate}[\em (a)]

\item The quantity
$$S_{n}(t):=1-\frac{1}{R_{n}(t)}$$
satisfies the following equation
\bea\label{P5}
S''_{n}&=&\left(\frac{1}{2S_n}+\frac{1}{S_n-1}\right)\left(S'_{n}\right)^{2}-\frac{1}{t}S'_{n}-\frac{(S_{n}-1)^{2}}{t^{2}}\left(\frac{\alpha^{2}}{2}\cdot\frac{1}{S_{n}}\right)\nonumber\\
&&+(2n+1+\alpha)\frac{S_{n}}{t}-\frac{1}{2}\frac{S_{n}(S_{n}+1)}{S_{n}-1},
\eea
which is a $P_V$ \cite{Mehta2004} with
$$\alpha=0,\qquad\beta=-\frac{\alpha^{2}}{2},\qquad\gamma=2n+1+\alpha,\qquad\delta=-\frac{1}{2}.$$
\item
The differential equation for $\sigma_{n}$ reads
\begin{subequations}\label{P5largestLUE}
\begin{align}\label{JMOP5}
(t \sigma_{n}'')^{2}=(\sigma_{n}-(t-2n-\alpha) \sigma_{n}')^{2}+4\sigma_{n}'^{2}(\sigma_{n}-t \sigma_{n}'-n(n+\alpha)),
\end{align}
which is the Jimbo-Miwa-Okamoto $\sigma$-form \cite{Jimbo1981, Okamoto1981} of $P_V$ (see (\ref{JMOsigmaPV}) below) with
\begin{equation}\label{paralargest}
\begin{aligned}
\nu_1=0,\qquad\nu_{2}=n,\qquad\nu_{3}=n+\alpha.
\end{aligned}
\end{equation}
\end{subequations}
\end{enumerate}
\end{proposition}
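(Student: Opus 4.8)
My plan is to obtain each of the two second-order equations by eliminating one auxiliary quantity, using the coupled first-order system \eqref{Riccattirn}--\eqref{RiccatiRn} for part (a) and the algebraic relation \eqref{odebetanrn} for part (b), and then to fix the parameters by matching against the standard normal forms.

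For part (a) I would decouple the Riccati pair \eqref{Riccattirn}--\eqref{RiccatiRn} by eliminating $r_n$. Since \eqref{RiccatiRn} is linear in $r_n$, I would first solve it for
\[
r_n=\tfrac12\bigl(tR_n'-tR_n^2-(2n+\alpha-t)R_n\bigr),
\]
differentiate once to express $tr_n'$ through $R_n,R_n',R_n''$, and substitute both into \eqref{Riccattirn}; this collapses the system to a single second-order ODE in $R_n$. I would then pass to $S_n$ through $R_n=1/(1-S_n)$, $R_n'=S_n'/(1-S_n)^2$, exploiting the identities $1/R_n=1-S_n$, $R_n/(R_n-1)=1/S_n$ and $1/R_n+1/(R_n-1)=(1-S_n)(S_n+1)/S_n$ that turn the rational coefficients of \eqref{Riccattirn} into the compact ones of \eqref{P5}. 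Comparing the outcome with the standard $P_V$ template
\[
w''=\Bigl(\tfrac{1}{2w}+\tfrac{1}{w-1}\Bigr)(w')^2-\tfrac{1}{t}w'+\tfrac{(w-1)^2}{t^2}\Bigl(\alpha w+\tfrac{\beta}{w}\Bigr)+\tfrac{\gamma w}{t}+\tfrac{\delta w(w+1)}{w-1}
\]
then reads off $\alpha=0$, $\beta=-\alpha^2/2$, $\gamma=2n+1+\alpha$, $\delta=-\tfrac12$. I expect the main obstacle to lie entirely in this substitution: differentiating $r_n$ introduces $R_n''$ weighted by rational functions of $R_n$, and one must verify that the Möbius change of variable to $S_n$ cancels exactly into \eqref{P5} with no leftover terms, which is where a sign or factor error would most easily creep in.

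For part (b) I expect a much shorter route that avoids $P_V$ altogether and instead uses the algebraic relation \eqref{odebetanrn} together with the identities of Proposition 2.1. Writing $r_n=\sigma_n'$, $\beta_n'=t\sigma_n''$ and $n(n+\alpha)-\beta_n=\sigma_n-t\sigma_n'$ (the latter from \eqref{sigmarnbetan}), I would substitute directly into \eqref{odebetanrn}. Splitting $(2n+\alpha)^2-4\beta_n$ and setting aside the $-4\beta_n(\sigma_n')^2$ piece, the terms $(2n+\alpha)^2(\sigma_n')^2+2(2n+\alpha)(\sigma_n-t\sigma_n')\sigma_n'+(\sigma_n-t\sigma_n')^2$ assemble into the perfect square $\bigl(\sigma_n-(t-2n-\alpha)\sigma_n'\bigr)^2$, while $-4\beta_n(\sigma_n')^2=4(\sigma_n')^2\bigl(\sigma_n-t\sigma_n'-n(n+\alpha)\bigr)$ supplies the cubic term; transferring $(\beta_n')^2=(t\sigma_n'')^2$ to the left then produces \eqref{JMOP5} exactly. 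Finally I would match \eqref{JMOP5} against the canonical Jimbo--Miwa--Okamoto $\sigma$-form \eqref{JMOsigmaPV}, comparing the coefficients that are constant, linear and quadratic in $\sigma_n'$ to extract $\nu_1=0$, $\nu_2=n$, $\nu_3=n+\alpha$. This last part is routine; the only care needed is to respect the shift and sign conventions of \eqref{JMOsigmaPV} when reading off the $\nu_k$.
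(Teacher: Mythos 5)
Your proposal is correct and takes essentially the same route as the source: the paper states this proposition without proof, recalling it from \cite{BasorChen2009}, where (a) is obtained exactly as you describe by solving \eqref{RiccatiRn} (linear in $r_n$) for $r_n$, substituting into \eqref{Riccattirn}, and passing to $S_n=1-1/R_n$, while (b) follows from \eqref{odebetanrn} --- which the paper's own remark identifies as coming from the end of the proof of Theorem 6 in \cite{BasorChen2009} --- via the substitutions $r_n=\sigma_n'$, $n(n+\alpha)-\beta_n=\sigma_n-t\sigma_n'$, $\beta_n'=t\sigma_n''$. Your completing-the-square step $\bigl((2n+\alpha)\sigma_n'\bigr)^2+2(2n+\alpha)(\sigma_n-t\sigma_n')\sigma_n'+(\sigma_n-t\sigma_n')^2=\bigl(\sigma_n-(t-2n-\alpha)\sigma_n'\bigr)^2$ and the parameter matching of \eqref{JMOP5} against \eqref{JMOsigmaPV} with $\nu_1=0$, $\nu_2=n$, $\nu_3=n+\alpha$ check out exactly.
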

\begin{remark}
Combining (\ref{sigmarnbetan}), (\ref{RiccatiRn}) and (\ref{betanrnRn}) gives
\begin{align}
\sigma_n&=\frac{\alpha^2}{4}\cdot\frac{R_n}{1-R_n}-\frac{1}{4}(4n+2\alpha-t) tR_n-\frac{1}{4}t^2 R_n^2-\frac{1}{4}\frac{t^2(R_n')^2}{R_n(1-R_n)}\label{sigmanRn}\\
&=-\frac{\alpha^2}{4}\cdot\frac{1}{S_n}+\frac{t }{4}\cdot\frac{4n+2\alpha-t}{S_n-1}-\frac{1}{4}\cdot\frac{t^2}{(S_n-1)^2}+\frac{1}{4}\cdot\frac{t^2 (S_n')^2}{(S_n-1)^2S_n}.\label{sigmanSn}
\end{align}
This is the desired relation for $\sigma_n$ and $S_n$ as it was demonstrated in \cite{Jimbo1981}.
\end{remark}

\section{Soft edge scaling limit: $P_{II},$ the $\sigma$-form of $P_{II}$ and the tail behavior of $\mathbb{P}(n,t)$}\label{largen}
For large $n,$ we define
$$
t=c_1 n+c_2 n^{1/3} s,
$$
where
$$
c_1=\left(\sqrt{\mu +1}+1\right)^2,\qquad
c_2=\frac{\left(\sqrt{\mu +1}+1\right)^{4/3}}{(\mu +1)^{1/6}},\qquad\mu=\begin{cases}
   \frac{\alpha}{n},       & \quad \alpha=O(n)\\
  0,  & \quad \alpha\;\text{is finite}\\
\end{cases}.
$$
In the case of finite $\alpha,$ we have
$$c_1=4,\qquad c_2=4^{2/3},\qquad t=4n+4^{2/3}n^{1/3}s.$$
We also use the following symbols:
\bean
y(s)&:=&\frac{c_2}{c_1}\lim\limits_{n\to\infty}\frac{S_{n}(t)}{n^{2/3}},\\
\sigma(s)&:=&\frac{c_2}{c_1}\lim\limits_{n\to\infty}\frac{\sigma_{n}(t)}{n^{2/3}},\\
r(s)&:=&\frac{c_2^2}{c_1}\lim\limits_{n\to\infty}\frac{r_n(t)}{n^{1/3}}.
\eean
Combining $\eqref{sigmarnbetan}$ with $\sigma_n'(t)=r_{n}(t),$ and \eqref{RiccatiRn} with \eqref{alphanRn}, we obtain
\begin{align}
r(s)&=\sigma'(s)=\frac{c_2^2}{c_1^2}\lim\limits_{n\to\infty}\frac{\beta_n(t)-n(n+\alpha)}{n^{4/3}}\label{rrnbetan}\\
&=\frac{c_1}{c_2}\lim\limits_{n\to\infty}n^{2/3}R_{n}(t)=\frac{1}{c_2}\lim\limits_{n\to\infty}\frac{\alpha_n(t)-2n-\alpha}{n^{1/3}}.\label{rRnalphan}
\end{align}
Based on the results for finite $n$ presented in the previous sections, we are able to find the relations and ODEs for the above four quantities.
The equation for $\sigma(s)$ indicates the behavior of $\mathbb{P}(n,t)$ when both $n$ and  $|s|$ large.

A $P_{II}$  and the $\sigma$-form of $P_{II}$ are established in \cite{PerretSchehr2015}
for the smallest eigenvalue distribution of LUE with $\alpha=O(n).$
The analysis there is done directly on the associated $P_V$ equation.
For our problem, we shall now present a further and more detailed investigation of $\sigma(s).$

\begin{theorem}\label{doublescalinglimit}
\begin{enumerate}[\em(i)]
\item
The variables $\sigma(s),~y(s),$ and $r(s)$ are connected by the relations
\begin{align}
\sigma(s)&=-\frac{s}{y(s)}-\frac{1}{y^2(s)}+\frac{ \left(y'(s)\right)^2}{4y^3(s)}\label{sigmay}\\
&=-\frac{r'(s)^2}{4 r(s)}-r(s)^2+s r(s),\label{sigmar}\\
r(s)&=-\frac{1}{y(s)}.\label{sigma'y}
\end{align}
\item The following equation is valid
\bea\label{P5inf}
y''(s)=\frac{3}{2}\cdot\frac{y'(s)^2}{y(s)}-2s y(s)-4,
\eea
in which we introduce
$$y(s)=w^{-2}(s),$$
so that $w(s)$ satisfies the $P_{II}$ with $\alpha=0,$ namely
$$
w''(s)=2w^{3}(s)+s w(s).
$$
\item The equation for $\sigma(s)$ reads
\bea\label{JMOP5inf}
\sigma ''(s)^2+4\sigma '(s)\left(\sigma '(s)^2-s\sigma '(s)+\sigma (s)\right)=0,
\eea
which can be brought into the $\sigma$-form of $P_{II}$ with $\theta=0$ by making the replacements
$s\rightarrow-2^{-1/3}s$ and $\sigma (s)\rightarrow-2^{1/3}\sigma (s).$ Moreover, for $r(s),$ we have
\bea\label{limitrn}
r''(s)=\frac{1}{2}\cdot\frac{r'(s)^2}{r(s)}+2sr(s)-4r^2(s).
\eea
\end{enumerate}
\end{theorem}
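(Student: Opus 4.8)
My plan is to derive the entire theorem from the finite-$n$ identities of Section~\ref{finiten} by one carefully controlled asymptotic expansion, after which every remaining assertion reduces to elementary algebra. Throughout I use the ansatz $t=c_1n+c_2n^{1/3}s$ with $\alpha=\mu n$, together with $S_n\sim\frac{c_1}{c_2}n^{2/3}y$, $\sigma_n\sim\frac{c_1}{c_2}n^{2/3}\sigma$ and $r_n\sim\frac{c_1}{c_2^2}n^{1/3}r$, and the fact that $\frac{d}{dt}=\frac{1}{c_2n^{1/3}}\frac{d}{ds}$ converts $s$-derivatives into the correct powers of $n$ (so that $S_n'\sim\frac{c_1}{c_2^2}n^{1/3}y'$ and $S_n''\sim\frac{c_1}{c_2^3}y''$). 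The computation hinges on three algebraic identities forced by the definitions of $c_1$ and $c_2$, namely $c_1=\mu+2+2\sqrt{\mu+1}$, $\mu^2/c_1=(\sqrt{\mu+1}-1)^2$ and $c_1^2/c_2^3=\sqrt{\mu+1}$; these are exactly what make the leading (divergent) contributions cancel in each limit.

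I would begin with the two definitional relations of part~(i). The identity $r(s)=\sigma'(s)$ is already contained in \eqref{rrnbetan}, being the limit of $\sigma_n'=r_n$; and \eqref{sigma'y} follows because \eqref{rRnalphan} gives $R_n\sim\frac{c_2}{c_1}r\,n^{-2/3}\to0$, so that $S_n=1-1/R_n\sim-1/R_n$ and hence $y=\frac{c_2}{c_1}\lim_{n\to\infty}n^{-2/3}S_n=-1/r$. The single genuinely analytic step is \eqref{sigmay}, obtained by inserting the ansatz into \eqref{sigmanSn}: the first two terms there are each of size $O(n^{4/3})$, but their leading parts cancel precisely because $\mu^2/c_1=(\sqrt{\mu+1}-1)^2$, while the third and fourth terms are genuinely $O(n^{2/3})$; these, together with the $O(n^{2/3})$ remainders of the first two, assemble after division by $\frac{c_1}{c_2}n^{2/3}$ into $\sigma=-s/y-1/y^2+(y')^2/(4y^3)$. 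Relation \eqref{sigmar} then follows by putting $y=-1/r$ into \eqref{sigmay}, since $-s/y=sr$, $-1/y^2=-r^2$ and $(y')^2/(4y^3)=-(r')^2/(4r)$.

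With \eqref{sigmay} in hand the remainder is calculus. Differentiating \eqref{sigmay} in $s$ and using $\sigma'=-1/y$ eliminates $\sigma$ and, after dividing by $y'$ and clearing denominators, yields exactly \eqref{P5inf}, $y''=\frac{3}{2}(y')^2/y-2sy-4$; setting $y=w^{-2}$ turns this into the $P_{II}$ equation $w''=2w^3+sw$, since $y''-\frac{3}{2}(y')^2/y=-2w^{-3}w''$ and $-2sy-4=-2sw^{-2}-4$. For part~(iii), \eqref{JMOP5inf} is merely \eqref{sigmar} rewritten with $r=\sigma'$: multiplying $\sigma=s\sigma'-(\sigma')^2-(\sigma'')^2/(4\sigma')$ by $4\sigma'$ gives $(\sigma'')^2+4\sigma'\big((\sigma')^2-s\sigma'+\sigma\big)=0$, and the stated substitution $s\mapsto-2^{-1/3}s$, $\sigma\mapsto-2^{1/3}\sigma$ is confirmed by direct insertion to land on the $\sigma$-form of $P_{II}$ with $\theta=0$. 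Finally \eqref{limitrn} comes from \eqref{P5inf} under $y=-1/r$: with $y'=r'/r^2$ and $y''=r''/r^2-2(r')^2/r^3$, multiplying through by $r^2$ gives $r''=\frac{1}{2}(r')^2/r+2sr-4r^2$.

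The main obstacle is the asymptotic bookkeeping in the seed step \eqref{sigmay} — and, should one instead obtain \eqref{JMOP5inf} directly from the $\sigma$-form \eqref{JMOP5}, there the divergent terms are of order $O(n^{8/3})$ and cancel by virtue of $c_1^2/c_2^3=\sqrt{\mu+1}$. In every such limit the naive leading term diverges, and the finite limiting equation emerges only after each rational expression is expanded one order beyond its leading behaviour — in particular keeping the $c_2n^{1/3}s$ correction buried inside $t$, inside $t-2n-\alpha$ and inside $n(n+\alpha)$, as well as the subleading terms of the large-$S_n$ expansion. A secondary and more routine matter is to justify that the limits $y,\sigma,r$ exist and that the expansions may be differentiated term by term; following \cite{PerretSchehr2015} I take the validity of the formal scaling for granted and concentrate on matching orders.
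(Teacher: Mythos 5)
Your proposal is correct, and its single genuinely analytic step---deriving \eqref{sigmay} by inserting the scaling ansatz into \eqref{sigmanSn}, with the $O(n^{4/3})$ parts of the first two terms cancelling via $\mu^2/c_1=(\sqrt{\mu+1}-1)^2$ and the surviving $O(n^{2/3})$ pieces normalized by $c_1^2/c_2^3=\sqrt{\mu+1}$---coincides with the paper's proof of part (i), down to obtaining \eqref{sigma'y} from the first expression in \eqref{rRnalphan} and \eqref{sigmar} by substituting $y=-1/r$. Where you differ is in parts (ii) and (iii): the paper's proof takes an independent formal scaling limit of each finite-$n$ equation (\eqref{P5inf} from the $P_V$ equation \eqref{P5}, \eqref{JMOP5inf} from the $\sigma$-form \eqref{JMOP5}, and \eqref{limitrn} from \eqref{odebetan} or \eqref{odern}), whereas you derive all three ODEs from the part-(i) relations by pure calculus: differentiating \eqref{sigmay} using $\sigma'=-1/y$, rewriting \eqref{sigmar} with $r=\sigma'$, and transplanting \eqref{P5inf} through $y=-1/r$. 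This is precisely the alternative derivation the authors record in the Remark immediately following the theorem, so your route is legitimate; it buys economy---one asymptotic computation, after which (ii)--(iii) are exact consequences rather than three further formal limits---while the paper's main route buys an internal consistency check that each finite-$n$ Painlev\'e-type equation degenerates coherently to the same $P_{II}$ structure. I verified your algebra throughout, including the reduction $y=w^{-2}$ and the $r''$ computation. One small slip in your closing aside: in the direct limit of \eqref{JMOP5} the divergent $O(n^{8/3})$ terms cancel identically, because $(t-2n-\alpha)^2\sim 4(\mu+1)n^2$ matches $4n(n+\alpha)=4(1+\mu)n^2$ with no condition on $c_2$; the identity $c_1^2/c_2^3=\sqrt{\mu+1}$ enters only at the next order, $O(n^2)$, where it scales the surviving coefficients to those of \eqref{JMOP5inf}. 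Since your proof does not actually pass through that limit, this is harmless, and your explicit deferral of the rigorous justification of the scaling (existence of the limits, term-by-term differentiability) to \cite{PerretSchehr2015} matches the paper's own level of rigor.
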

\begin{proof}
(\ref{sigmay}) can be verified by (\ref{sigmanSn}).
To be specific, substituting $\sigma_n(t)$ by $\frac{c_1}{c_2}n^{2/3}\sigma(s),$
and $S_n(t)$ by $\frac{c_1}{c_2}n^{2/3}y(s)$ in (\ref{sigmanSn}),
we obtain (\ref{sigmay}) as the only retained leading order term when $n\rightarrow\infty.$
\eqref{sigma'y} is a direct consequence of the first expression in \eqref{rRnalphan}, which
combined with \eqref{sigmay} results in \eqref{sigmar}.
Finally, equation (\ref{P5inf}) arises from (\ref{P5}), (\ref{JMOP5inf}) from \eqref{JMOP5},
and \eqref{limitrn} from  \eqref{odebetan} (as well as from \eqref{odern}, see below).
In case \eqref{limitrn} we take into account \eqref{rrnbetan}.
\end{proof}

\begin{remark} We can prove \eqref{P5inf}-\eqref{limitrn} in another way, by use of \eqref{sigmay}-\eqref{sigma'y} and $\sigma'(s)=r(s).$
Indeed, differentiating both sides of \eqref{sigmay}, in view of \eqref{sigma'y}, we get \eqref{P5inf}.
Replacing $r(s)$ by $\sigma'(s)$ and $r'(s)$ by $\sigma''(s)$ in \eqref{sigmar} yields \eqref{JMOP5inf}.
Solve for $\sigma(s)$ from \eqref{JMOP5inf}, then differentiation gives \eqref{limitrn}.
\end{remark}

Now we go ahead with the evaluation of $\mathbb{P}(n,t).$ The definitions of $\sigma_n(t)$ and $\sigma(s)$ imply
\begin{align}
\sigma(s)=\frac{d}{ds}\ln\mathbb{\hat{P}}(s),\nonumber
\end{align}
where $\mathbb{\hat{P}}(s):=\lim\limits_{n\to\infty}\mathbb{P}(n,c_1 n+c_2 n^{1/3} s).$
To obtain the expansion of $\sigma(s)$ as $s\rightarrow-\infty,$ we assume
$$\sigma(s)=\lambda_2s^2+\lambda_1s+\sum\limits^{\infty}_{k=0}\frac{d_k}{s^{k}}.$$
Substituting this into (\ref{JMOP5inf}) yields
\bean
\sigma (s)=\frac{s^2}{4}-\frac{1}{8 s}+\frac{9}{64
   s^4}-\frac{189}{128 s^7}+\frac{21663}{512 s^{10}}+O\left(\frac{1}{s^{13}}\right),
\eean
from which follows
\bean
\mathbb{\hat{P}}(s)&=&\iota_1\frac{\exp\left(\frac{s^3}{12}\right)}{(-s)^{1/8}}\exp\left(-\frac{3}{64 s^3}+\frac{63}{256
   s^6}-\frac{2407}{512 s^9}+O\left(\frac{1}{s^{12}}\right)\right)\\
&=&\iota_1\frac{\exp\left(\frac{s^3}{12}\right)}{(-s)^{1/8}}\left(1-\frac{3}{2^{6} s^3}+\frac{2025}{2^{13} s^6}-\frac{2470825}{2^{19}
   s^9}+\frac{26389914075}{2^{27} s^{12}}+O\left(\frac{1}{s^{12}}\right)\right),
\eean
where the second equality results from the Taylor series for the exponential function.
Here $\iota_1$ is the normalization constant and is given in \cite{Tracy2009} by
$$\iota_1=2^{1/24}e^{\varsigma'(-1)},$$
where $\varsigma'(-1)$ is the derivative of the Riemann zeta function evaluated at -1. The above result agrees with (1.19) in \cite{Tracy1994}, since $w(z)$ plays the same role as $q(s;\lambda)$ there.
In fact, because of (\ref{sigma'y}) and $y(s)=w^{-2}(s),$ we have
$$
\sigma'(s)=-w^{2}(s).
$$
As $s\rightarrow\infty,$ to continue, we write
$$\mathbb{\hat{P}}(s)=1-\varepsilon f(s),$$
where $\varepsilon>0$ is sufficiently small and $f(s)>0,$ then
$$\sigma(s)=\frac{\varepsilon f'(s)}{1-\varepsilon f(s)}.$$
Hence from the left hand side of \eqref{JMOP5inf} there follows a quadratic polynomial in $\varepsilon$
and the constant term of which has to be 0, namely
$$
f^{(3)}(s)^2-4 s f''(s)^2+4 f'(s) f''(s)=0.
$$
If we define
$$h(s):=\frac{f'(s)}{f(s)}=\frac{d}{ds}\ln f(s),$$
then
\bean
\left(h''(s)+3 h(s)
   h'(s)+h^3(s)\right)^2-4 s \left(h'(s)+h^2(s)\right)^2+4 h(s)
   \left(h'(s)+h^2(s)\right)=0.
\eean
Suppose
$$h(s)=\lambda_0 s^{1/2}+\sum\limits^{\infty}_{k=0}\frac{\nu_k}{s^{k/2}},$$
then the coefficients can be determined by use of the preceding equation for $h(s)$ and we obtain
$$h(s)=-2s^{1/2}-\frac{3}{2s}+\frac{35}{16s^{5/2}}+O\left(\frac{1}{s^4}\right),$$
which implies
$$\mathbb{\hat{P}}(s)=1-\iota_2\frac{\exp\left(-\frac{4}{3}s^{3/2}\right)}{s^{3/2}}\left(1-\frac{35}{24s^{3/2}}
  +O\left(\frac{1}{s^{3}}\right)\right).$$
Referring to \cite{Tracy2009}, we see that $\iota_2=\frac{1}{16\pi}.$
\section{The behavior of $z^{\alpha/2}e^{-z/2}P_n(z)$ on $(0,t)$ and of $e^{-z^2/2}P_n(z)$ on $(-\infty,t)$ for finite $n$ and large $n$}
We now turn our attention to the ladder operators given by \eqref{ladderoperatorAnBn}
to develop the differential equation for orthogonal polynomials $P_n(z)$ defined by (\ref{orthpoly}) and (\ref{Pn}).
Eliminating $P_{n-1}(z)$ from the lowering and raising operators, incorporated with $(S_2'),$ produces
\bea\label{odePn}
P_n''-\left(\frac{A_n'}{A_n}+
   v'\right)P_n'+\left(B_n'- \frac{A_n' }{A_n}B_n+\sum\limits^{n-1}_{j=0}A_j\right)P_n=0,
\eea
which is presented in \cite{BasorChen2009} and \cite{BasorChenEhrhardt2010}.
To continue, we set, with suitable continuation in $z,$
$$P_n(z)=e^{v(z)/2}\phi_n(z),\qquad v(z)=-\ln w(z),$$
and, in light of (\ref{odePn}), establish
\bea\label{odephin}
\phi_{n}''-\frac{A_n'}{A_n}\phi_{n}'+\left(B_n'-\frac{A_n'}{A_n}\left(B_n+\frac{v'}{2}\right)+\sum\limits^{n-1}_{j=0}A_j+\frac{v''}{2}-\frac{v'^2}{4}\right)\phi_{n}=0.
\eea

Below we will apply the above equation to the largest eigenvalue distribution of LUE on $(0,t)$ and of GUE on $(-\infty,t).$
For the Laguerre case, that is,
$$\phi_n(z)=z^{\alpha/2}e^{-z/2}P_n(z),$$
using the same notations as in Section \ref{finiten} for finite $n$ and as in Section \ref{largen} for large $n$,
we have the following results :
\begin{theorem}
\begin{enumerate}[\em (i)]
\item For finite $n$
\begin{subequations}\label{LUEfinite}
\begin{equation}\label{finitephin}
\begin{aligned}
\phi_{n}''(z)&+\left(\frac{1}{z}+\frac{1}{z-t}-\frac{1}{z-t+t R_n(t)}\right)\phi_{n}'(z)\\
&+\left(\frac{-\frac{1}{4}\alpha^2}{z^2}+\frac{\frac{1}{2}(2n+\alpha+1)-\kappa_1(t)-\kappa_2(t)}{z}+\frac{\kappa_1(t)}{z-t}+\frac{\kappa_2(t)}{z-t+t R_n(t)}-\frac{1}{4}\right)\phi_{n}(z)=0,
\end{aligned}
\end{equation}
where
\begin{equation}\label{finitepara}
\begin{aligned}
\kappa_1(t)=\frac{1}{2}R_n(t)-\frac{R_n'(t)}{2R_n(t)}-\frac{\sigma
   _n(t)}{t},\qquad
\kappa_2(t)=\frac{R_n'(t)}{2 R_n(t)}+\frac{R_n'(t)}{2(1-R_n(t))},
\end{aligned}
\end{equation}
\end{subequations}
and $\sigma_n(t)$ can be expressed in terms of $R_n(t)$ by using \eqref{sigmanRn}.
\item For $z=c_1 n+c_2 n^{1/3}z^*,$  and $\phi(z^*):=\lim\limits_{n\to\infty}\phi_n(z)$
\begin{equation}\label{inftyphi}
\begin{aligned}
\phi''(z^*)&+\left(\frac{1}{z^*-s}-\frac{1}{z^*-s+r(s)}\right)\phi'(z^*)\\
&+\left(\frac{\frac{r'(s)^2}{4 r(s)}-\frac{r'(s)}{2 r(s)}-s r(s)+r^2(s)}{z^*-s}+\frac{\frac{r'(s)}{2 r(s)}}{z^*-s+r(s)}-z^*\right)\phi(z^*)=0.
\end{aligned}
\end{equation}
\end{enumerate}
\end{theorem}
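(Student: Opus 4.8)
The plan is to obtain (i) by specializing the general equation \eqref{odephin} to the Laguerre weight, and then to derive (ii) by rescaling the finite-$n$ equation \eqref{finitephin}. For (i) I would first record that $w(x)=x^\alpha e^{-x}$ gives $v(z)=z-\alpha\ln z$, hence $v'=1-\alpha/z$ and $v''=\alpha/z^2$, and combine the two terms of $A_n$ in \eqref{AnBnLUE} over a common denominator, $A_n(z)=\frac{z-t+tR_n}{z(z-t)}$. The decisive structural fact is that $A_n$ has simple poles at $z=0,t$ and a single simple zero at $z=t-tR_n$, so that $-A_n'/A_n=\frac1z+\frac1{z-t}-\frac1{z-t+tR_n}$, which is exactly the coefficient of $\phi_n'$ in \eqref{finitephin}. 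It remains to expand the coefficient of $\phi_n$, namely $B_n'-\frac{A_n'}{A_n}(B_n+\tfrac12 v')+\sum_{j=0}^{n-1}A_j+\tfrac12 v''-\tfrac14 v'^2$, into partial fractions in $z$, $z-t$ and $z-t+tR_n$. Here I would use $\sum_{j=0}^{n-1}A_j=\frac{-\sigma_n/t}{z-t}+\frac{n+\sigma_n/t}{z}$, which follows from $\sum_{j=0}^{n-1}R_j=-\sigma_n/t$ in \eqref{sigmarnbetan}.

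Several checkpoints organize this computation. The double poles at $z=t$ produced by $B_n'$ and by the product $-\frac{A_n'}{A_n}(B_n+\tfrac12 v')$ are $\mp r_n/(z-t)^2$ and must cancel, leaving only a simple pole there; at $z=0$ the analogous $z^{-2}$ contributions combine with $\tfrac12 v''-\tfrac14 v'^2$ to leave precisely $-\tfrac{\alpha^2/4}{z^2}$; and the constant term is $-\tfrac14$, coming from $-\tfrac14 v'^2$. The real obstacle is that the three simple-pole residues emerge in terms of $r_n$, $R_n$ and $\sigma_n$, not in the stated form. To finish I would invoke the Riccati equation \eqref{RiccatiRn}, $tR_n'=tR_n^2+(2n+\alpha-t)R_n+2r_n$, to eliminate $r_n$ in favour of $R_n'$; a short manipulation then turns the residue at $z=t-tR_n$ into $\kappa_2$ and the residue at $z=t$ into $\kappa_1$ as given in \eqref{finitepara}, after which the $z^{-1}$ residue is fixed as $\tfrac12(2n+\alpha+1)-\kappa_1-\kappa_2$.

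For (ii) I would substitute $z=c_1n+c_2n^{1/3}z^*$ and $t=c_1n+c_2n^{1/3}s$ into \eqref{finitephin}, use $z-t=c_2n^{1/3}(z^*-s)$ together with $R_n\sim\frac{c_2}{c_1}n^{-2/3}r$ (so $tR_n\sim c_2n^{1/3}r$ and $z-t+tR_n\sim c_2n^{1/3}(z^*-s+r)$) from \eqref{rRnalphan}, and multiply through by $c_2^2n^{2/3}$ to balance $\phi_n''=(c_2n^{1/3})^{-2}\phi''$. The coefficient of $\phi'$ then passes to $\frac{1}{z^*-s}-\frac{1}{z^*-s+r}$. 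For the two finite residues I would use $R_n'\sim\frac1{c_1}n^{-1}r'$ and $\sigma_n\sim\frac{c_1}{c_2}n^{2/3}\sigma$ to get $\kappa_2\sim\frac{r'}{2c_2 r}n^{-1/3}$ and $\kappa_1\sim-\frac1{c_2}\bigl(\frac{r'}{2r}+\sigma\bigr)n^{-1/3}$; these yield the limiting residues $\frac{r'/(2r)}{z^*-s+r}$ and $-\frac{r'/(2r)+\sigma}{z^*-s}$, the latter being rewritten by \eqref{sigmar}, $\sigma=-\frac{r'^2}{4r}-r^2+sr$, into the stated $\frac{r'^2}{4r}-\frac{r'}{2r}-sr+r^2$.

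The hard part of (ii) is the emergence of the term $-z^*$. After multiplication by $c_2^2n^{2/3}$ the constant $-\tfrac14$, the leading part $\tfrac12(2n+\alpha+1)/z$ and the double pole $-\tfrac{\alpha^2/4}{z^2}$ are each $O(n^{2/3})$, and they must cancel to leading order and then conspire to produce exactly $-z^*$. Writing $z=c_1n(1+\varepsilon)$ with $\varepsilon=\frac{c_2z^*}{c_1n^{2/3}}$ and setting $u=\sqrt{\mu+1}$, $c_1=(u+1)^2$, I expect the $O(n^{2/3})$ constant to vanish identically since $-(u-1)^2+2(u^2+1)-(u+1)^2=0$, while the $O(\varepsilon)$ correction survives as $-\frac{u\,c_2^3}{(u+1)^4}z^*$, which equals $-z^*$ precisely because $c_2^3=(u+1)^4/u$. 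Confirming these two cancellations for both the $\alpha=O(n)$ and finite-$\alpha$ cases, and checking that all higher-order terms are $o(1)$, is the main labour; the rest is bookkeeping.
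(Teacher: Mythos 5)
Your proposal is correct and follows essentially the same route as the paper: substitute \eqref{AnBnLUE} and $v(z)=z-\alpha\ln z$ into \eqref{odephin}, use $\sum_{j=0}^{n-1}R_j=-\sigma_n/t$ and the Riccati equation \eqref{RiccatiRn} to eliminate $r_n$ and reach \eqref{finitepara}, then pass to the limit via \eqref{rRnalphan} and \eqref{sigmar} exactly as in the paper's computation of $c_2\lim n^{1/3}\kappa_{1,2}$ and of the coefficient $C_L$. Your checks are sound throughout --- in particular the cancellation identity $-(u-1)^2+2(u^2+1)-(u+1)^2=0$ and $c_2^3=(u+1)^4/u$ correctly justify the emergence of $-z^*$, a step the paper asserts without displaying the algebra.
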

\begin{proof}
\begin{itemize}
\item[(i)]
Substituting \eqref{AnBnLUE} and $v(z)=z-\alpha\ln z$ in (\ref{odephin}) furnishes
\begin{equation}
\begin{aligned}
\phi_{n}''(z)&+\left(\frac{1}{z}+\frac{1}{z-t}-\frac{1}{z-t+t R_n(t)}\right)\phi_{n}'(z)\non\\
&+\left(\frac{-\frac{1}{4}\alpha^2}{z^2}+\frac{\frac{1}{2}(2n+\alpha+1)-\kappa_1(t)-\kappa_2(t)}{z}+\frac{\kappa_1(t)}{z-t}+\frac{\kappa_2(t)}{z-t+t R_n(t)}-\frac{1}{4}\right)\phi_{n}(z)=0,
\end{aligned}
\end{equation}
where
\bean
\kappa_1(t)&=&-\frac{1}{t}\left(\frac{r_n(t)}{R_n(t)}+\frac{\alpha}{2}+n\right)+\sum\limits^{n-1}_{j=0}R_{j}(t)+\frac{1}{2},\\
\kappa_2(t)&=&\frac{1}{t-t R_n(t)}\left(\frac{r_n(t)}{R_n(t)}+\frac{\alpha}{2}+n\right)-\frac{1}{2}.
\eean
Applying \eqref{RiccatiRn} to cancel $r_n(t)$ in $\kappa_1(t)$ and $\kappa_2(t),$
and taking into consideration $\sum\limits^{n-1}_{j=0}R_{j}(t)=-\frac{\sigma_n(t)}{t},$
we get \eqref{finitepara} and hence \eqref{LUEfinite}.
\item[(ii)]
Using the first equality of \eqref{rRnalphan} and the definition of $\sigma(s),$ we can show that
$$c_2\lim\limits_{n\to\infty}n^{1/3}\kappa_1(t)=-\frac{r'(s)}{2r(s)}-\sigma(s),\qquad c_2\lim\limits_{n\to\infty}n^{1/3}\kappa_2(t)=\frac{r'(s)}{2r(s)}.$$
Denote the coefficient of $\phi_n(z)$ in \eqref{finitephin} by $C_L,$ that is,
$$C_L=\frac{-\frac{1}{4}\alpha^2}{z^2}+\frac{\frac{1}{2}(2n+\alpha+1)}{z}-\frac{1}{4}-\frac{\kappa_1(t)+\kappa_2(t)}{z}+\frac{\kappa_1(t)}{z-t}+\frac{\kappa_2(t)}{z-t+t R_n(t)}.$$
Then, for $z=c_1 n+c_2 n^{1/3}z^*$ and $t=c_1 n+c_2 n^{1/3}s,$ we have
\bean
c_2^2\lim\limits_{n\to\infty}n^{2/3}C_L&=&c_2^2\lim\limits_{n\to\infty}n^{2/3}\left(\frac{-\frac{1}{4}\alpha^2}{z^2}+\frac{\frac{1}{2}(2n+\alpha+1)}{z}-\frac{1}{4}\right)\\
&&-c_2^2\lim\limits_{n\to\infty}n^{2/3}\frac{\kappa_1(t)+\kappa_2(t)}{z}+c_2^2\lim\limits_{n\to\infty}n^{2/3}\frac{\kappa_1(t)}{z-t}\\
&&+c_2^2\lim\limits_{n\to\infty}n^{2/3}\frac{\kappa_2(t)}{z-t+t R_n(t)}\\
&=&-z^*-0+\frac{-\frac{r'(s)}{2 r(s)}-\sigma(s)}{z^*-s}+\frac{\frac{r'(s)}{2 r(s)}}{z^*-s+r(s)}\\
&=&\frac{-\frac{r'(s)}{2 r(s)}-\sigma(s)}{z^*-s}+\frac{\frac{r'(s)}{2 r(s)}}{z^*-s+r(s)}-z^*.
\eean
Noting that, for $\phi(z^*):=\lim\limits_{n\to\infty}\phi_n(z),$
$$\phi'(z^*)=c_2\lim\limits_{n\to\infty} n^{1/3}\phi_{n}'(z),
\qquad\phi''(z^*)=c_2^2\lim\limits_{n\to\infty} n^{2/3}\phi_{n}''(z),$$
according to \eqref{finitephin} multiplied by $c_2^2n^{2/3},$ as $n\rightarrow\infty,$
we conclude that \eqref{inftyphi} is valid.
\end{itemize}
\end{proof}

Now we consider the largest eigenvalue distribution of GUE on $(-\infty,t).$
Let $P_n(x)$ be monic polynomials orthogonal with respect to the deformed Hermite weight with one jump
$$w(x)=2e^{-x^2}\left(1-\theta(x-t)\right)=\begin{cases}
   0,       & \quad \text{if } x>t\\
  2e^{-x^2},  & \quad \text{if }  x\leq t\\
\end{cases},$$
namely
$$
\int_{-\infty}^{\infty}P_m(x)P_{n}(x)w(x)dx= 2h_n(t)\delta_{mn},
$$
or, what amounts to the same thing,
$$
\int_{-\infty}^{t}P_m(x)P_{n}(x)e^{-x^2}dx= h_n(t)\delta_{mn}.$$
Consequently, the results in \cite{ChenPruessner2005} with $\beta=-2$ there in the weight function are valid
for our $P_n(x)$ which corresponds to the largest eigenvalue distribution of GUE on $(-\infty,t).$ To begin with,
\bea\label{GUEAnBn}
A_n(z)=2\left(\frac{\alpha_n(t)}{z-t}+1\right),\qquad B_n(z)=\frac{r_n(t)}{z-t},
\eea
where $\alpha_n(t)$ is the recurrence coefficient, that is,
$$
zP_n(z)=P_{n+1}(z)+\alpha_n(t)P_n(z)+\beta_n(t)P_{n-1}(z),
$$
and $r_n(t)$ is defined by
$$
r_{n}(t):=-\frac{P_{n}(t,t)P_{n-1}(t,t)}{h_{n-1}(t)}e^{-t^2}.
$$
Based on the results in \cite{ChenPruessner2005}, for
$$\phi_n(z)=e^{-z^2/2}P_n(z),$$
we obtain the following analogue of the preceding theorem:
\begin{theorem}
\begin{enumerate}[\em (i)]
\item For finite $n$
\begin{subequations}\label{odephialphan}
\begin{equation}\label{GUEphin}
\begin{aligned}
\phi_{n}''(z)&+\left(\frac{1}{z-t}-\frac{1}{z-t+\alpha_n(t)}\right)\phi_{n}'(z)\\
&+\left(\frac{\kappa_3(t)}{z-t}+\frac{\kappa_4(t)}{z-t+\alpha_n(t)}+2n+1-z^2\right)\phi_{n}(z)=0,
\end{aligned}
\end{equation}
where
\begin{equation}\label{finiteGUEpara}
\begin{aligned}
\kappa_3(t)&=\frac{\left(\alpha_n'(t)\right)^2}{4\alpha_n(t)}-\frac{\alpha_n'(t)}{2\alpha_n(t)}-\alpha_n^3(t)+2t\alpha_n^2(t)+(-t^2+2n+1)\alpha_n(t),\qquad \kappa_4(t)=\frac{\alpha_n'(t)}{2\alpha_n(t)},
\end{aligned}
\end{equation}
\end{subequations}
and $\alpha_n(t)$ satisfies
\bea\label{GUEalphan}
\alpha_n''(t)=\frac{\left(\alpha_n'(t)\right)^2}{2\alpha_n(t)}+6\alpha_n^3(t)-8t\alpha_n^2(t)+2(t^2-2n-1)\alpha_n(t).
\eea
\item\label{limitphinGUE}
For $z=\sqrt{2n}+\frac{z^*}{\sqrt{2}n^{1/6}},~t=\sqrt{2n}+\frac{s}{\sqrt{2}n^{1/6}},$
and $\phi(z^*):=\lim\limits_{n\to\infty}\phi_{n}(z)$
\begin{equation}\label{inftyphiGUE}
\begin{aligned}
\phi''(z^*)&+\left(\frac{1}{z^*-s}-\frac{1}{z^*-s+u(s)}\right)\phi'(z^*)\\
&+\left(\frac{\frac{\left(u'(s)\right)^2}{4u(s)}-\frac{u'(s)}{2u(s)}-su(s)+u^2(s)}{z^*-s}+\frac{\frac{u'(s)}{2u(s)}}{z^*-s+u(s)}-z^*\right)\phi(z^*)=0,
\end{aligned}
\end{equation}
where $u(s):=\lim\limits_{n\to\infty}\sqrt{2}n^{1/6}\alpha_{n}(t)$ is a solution of \eqref{limitrn}.
\end{enumerate}
\end{theorem}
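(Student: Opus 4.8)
The plan is to specialize the master equation \eqref{odephin} to the jump-Hermite weight and then mirror, step for step, the soft-edge analysis already carried out in the Laguerre case. For part (i) I would substitute the expressions \eqref{GUEAnBn} for $A_n$ and $B_n$, together with $v'(z)=2z$ and $v''(z)=2$, into \eqref{odephin}. Since $A_n'/A_n=-\alpha_n/[(z-t)(z-t+\alpha_n)]$, a single partial-fraction step gives the first-derivative coefficient $-A_n'/A_n=\frac{1}{z-t}-\frac{1}{z-t+\alpha_n}$, matching \eqref{GUEphin} exactly.

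For the coefficient of $\phi_n$, the key observation is that the double pole at $z=t$ cancels: the $B_n'=-r_n/(z-t)^2$ contribution is annihilated by the $r_n/(z-t)^2$ coming from $-(A_n'/A_n)B_n$. Decomposing the surviving rational part, and adding $\sum_{j=0}^{n-1}A_j=\frac{2\sum_{j=0}^{n-1}\alpha_j}{z-t}+2n$ together with $\frac{v''}{2}-\frac{v'^2}{4}=1-z^2$, I can read off a simple pole at $z=t$ with residue $t-\frac{r_n}{\alpha_n}+2\sum_{j=0}^{n-1}\alpha_j$, a simple pole at $z=t-\alpha_n$ with residue $\frac{r_n}{\alpha_n}-t+\alpha_n$, and a regular part $2n+1-z^2$. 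The regular part already agrees with \eqref{GUEphin}; the remaining task is to convert the two residues into \eqref{finiteGUEpara} by means of the auxiliary identities of \cite{ChenPruessner2005} (taken at $\beta=-2$). The decisive ingredient is the $t$-derivative relation $\alpha_n'=2(r_n+\alpha_n^2-t\alpha_n)$, which immediately turns the $(z-t+\alpha_n)$-residue into $\kappa_4=\alpha_n'/(2\alpha_n)$; substituting it to eliminate $r_n$, and inserting the \cite{ChenPruessner2005} expression for $\sum_{j=0}^{n-1}\alpha_j$, recasts the $(z-t)$-residue as $\kappa_3$.

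The second-order equation \eqref{GUEalphan} I would obtain exactly as \eqref{odebetan} arose in the Laguerre case: starting from the coupled relations of \cite{ChenPruessner2005} (a Riccati-type equation for $\alpha_n$ and the algebraic relation tying $r_n$ to $\alpha_n$ and $\beta_n$), solve for $r_n$, differentiate in $t$, and use $\alpha_n'=2(r_n+\alpha_n^2-t\alpha_n)$ to close the equation in $\alpha_n$ alone; the result is the Painlev\'e-IV-type equation \eqref{GUEalphan}. I expect the correct identification of the \cite{ChenPruessner2005} identities, together with this elimination-and-differentiation, to be the main obstacle in part (i).

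For part (ii) I would insert $z=\sqrt{2n}+z^*/(\sqrt 2\,n^{1/6})$, $t=\sqrt{2n}+s/(\sqrt 2\,n^{1/6})$, and $\alpha_n\sim u(s)/(\sqrt 2\,n^{1/6})$ into \eqref{GUEphin} and track leading orders. From $\phi_n''(z)\sim 2n^{1/3}\phi''(z^*)$ and $\phi_n'(z)\sim\sqrt 2\,n^{1/6}\phi'(z^*)$, I multiply the whole equation by $(2n^{1/3})^{-1}$. Then $\frac{1}{z-t}\sim\frac{\sqrt 2\,n^{1/6}}{z^*-s}$ and $\frac{1}{z-t+\alpha_n}\sim\frac{\sqrt 2\,n^{1/6}}{z^*-s+u(s)}$ reproduce the first-derivative coefficient; the three competing $n^{1/6}$-growths inside $\kappa_3$ combine into $\kappa_3/(\sqrt 2\,n^{1/6})\to\frac{u'^2}{4u}-\frac{u'}{2u}-su+u^2$, while $\kappa_4/(\sqrt 2\,n^{1/6})\to\frac{u'}{2u}$; finally $2n+1-z^2\sim-2n^{1/3}z^*$, so $(2n^{1/3})^{-1}(2n+1-z^2)\to-z^*$. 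Collecting these limits gives \eqref{inftyphiGUE}. That $u(s)$ solves \eqref{limitrn} follows from the same scaling applied to \eqref{GUEalphan}: the dominant balance is of order $n^{1/6}$, and dividing out $\sqrt 2\,n^{1/6}$ leaves $u''=\tfrac12 u'^2/u+2su-4u^2$, which is precisely \eqref{limitrn}. The delicate point throughout is that several coefficients individually diverge (like $n^{1/6}$, and $2n+1-z^2$ like $n^{1/3}$) while their rescaled combinations remain finite, so the expansions must be carried to the correct subleading order.
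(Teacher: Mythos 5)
Your proposal is correct and follows essentially the same route as the paper's own proof: the paper likewise substitutes \eqref{GUEAnBn} into \eqref{odephin} to get an intermediate equation whose simple-pole residues are exactly your $-\frac{r_n}{\alpha_n}+t+2\sum_{j=0}^{n-1}\alpha_j$ and $\frac{r_n}{\alpha_n}-t+\alpha_n$, then converts these to $\kappa_3,\kappa_4$ using the Chen--Pruessner identities (your relation $\alpha_n'=2\left(r_n+\alpha_n^2-t\alpha_n\right)$ is precisely \eqref{rnDalphan}, and the sum formula with $\alpha_{n-1}$ eliminated via \eqref{Drnalphan-1} is used just as you describe), derives \eqref{GUEalphan} by the same elimination-and-substitution from \eqref{rnalphan-1}--\eqref{rnDalphan}, and performs the identical soft-edge scaling in part (ii), including the limits $\kappa_3/\left(\sqrt{2}n^{1/6}\right)$, $\kappa_4/\left(\sqrt{2}n^{1/6}\right)$ and division of \eqref{GUEphin} by $2n^{1/3}$. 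The one point left implicit in your sketch --- that rewriting the $(z-t)$-residue as $\kappa_3$ requires also reducing the $r_n'$ term (either via \eqref{GUEalphan} or via $r_n'=\frac{r_n^2}{\alpha_n}-2(n+r_n)\alpha_n$) --- is equally implicit in the paper, so there is no gap of substance.
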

\begin{proof}
\begin{itemize}
\item[(i)]
Substitution of \eqref{GUEAnBn} in (\ref{odephin}) leads to
\begin{equation}\label{GUEphin1}
\begin{aligned}
\phi_{n}''(z)&+\left(\frac{1}{z-t}-\frac{1}{z-t+\alpha_n(t)}\right)\phi_{n}'(z)\\
&+\left(\frac{-\frac{r_n(t)}{\alpha_n(t)}+t+2\sum\limits^{n-1}_{j=0}\alpha_j(t)}{z-t}+\frac{\frac{r_n(t)}{\alpha_n(t)}-t+\alpha_n(t)}{z-t+\alpha_n(t)}+2n+1-z^2\right)\phi_{n}(z)=0.
\end{aligned}
\end{equation}
The following results are established in \cite{ChenPruessner2005} (see (22), (24), (26), (28) and (31)),
\begin{subequations}
\begin{align}
r_n^2(t)&=2(n+r_n(t))\alpha_n(t)\alpha_{n-1}(t),\label{rnalphan-1}\\
r_n'(t)&=2(n+r_n(t))(\alpha_{n-1}(t)-\alpha_n(t)),\label{Drnalphan-1}\\
r_n(t)&=\alpha_n(t)(t-\alpha_n(t))+\frac{1}{2}\alpha_n'(t),\label{rnDalphan}\\
-2\sum\limits^{n-1}_{j=0}\alpha_j(t)&=2tr_n(t)-2(n+r_n(t))(\alpha_n(t)+\alpha_{n-1}(t)).\label{sumalpha}
\end{align}
\end{subequations}
To remove $\alpha_{n-1}(t)$ from \eqref{sumalpha}, we add \eqref{Drnalphan-1} to it and get
$$2\sum\limits^{n-1}_{j=0}\alpha_j(t)=r_n'(t)+(4\alpha_n(t)-2t)r_n(t)+4n\alpha_n(t),$$
so that on account of \eqref{rnDalphan}, equation \eqref{odephialphan} follows from \eqref{GUEphin1}.
Equation \eqref{GUEalphan} is true due to \eqref{rnalphan-1}-\eqref{rnDalphan}.
In fact, from \eqref{rnalphan-1} and \eqref{Drnalphan-1}, there follows
$$\frac{r_n^2(t)}{\alpha_n(t)}=2(n+r_n(t))\alpha_{n-1}(t)=r_n'(t)+2(n+r_n(t))\alpha_n(t),$$
which combined with \eqref{rnDalphan} implies \eqref{GUEalphan}.
\item[(ii)]
For $t=\sqrt{2n}+\frac{s}{\sqrt{2}n^{1/6}}$ and $u(s)=\lim\limits_{n\to\infty}\sqrt{2}n^{1/6}\alpha_{n}(t),$
we get
$$u'(s)=\lim\limits_{n\to\infty}\alpha_{n}'(t),
\qquad u''(s)=\lim\limits_{n\to\infty}\frac{\alpha_{n}''(t)}{\sqrt{2}n^{1/6}}.$$
If we divide both sides of \eqref{GUEalphan} by $\sqrt{2}n^{1/6}$ and taking $n\rightarrow\infty,$
then we obtain
\bean
u''(s)=\frac{1}{2}\cdot\frac{u'(s)^2}{u(s)}+2su(s)-4u^2(s).
\eean
In addition, according to \eqref{finiteGUEpara},
\bean
\lim\limits_{n\to\infty}\frac{\kappa_3(t)}{\sqrt{2}n^{1/6}}=\frac{\left(u'(s)\right)^2}{4u(s)}-\frac{u'(s)}{2u(s)}-su(s)+u^2(s),
\qquad \lim\limits_{n\to\infty}\frac{\kappa_4(t)}{\sqrt{2}n^{1/6}}=\frac{u'(s)}{2u(s)}.
\eean
For $z=\sqrt{2n}+\frac{z^*}{\sqrt{2}n^{1/6}}$ and $\phi(z^*)=\lim\limits_{n\to\infty}\phi_{n}(z),$ we have
$$\phi'(z^*)=\lim\limits_{n\to\infty}\frac{\phi_{n}'(z)}{\sqrt{2}n^{1/6}},\qquad\phi''(z^*)=\lim\limits_{n\to\infty}\frac{\phi_{n}''(z)}{2n^{1/3}}.$$
Dividing both sides of \eqref{GUEphin} by $2n^{1/3},$
as $n\rightarrow\infty,$ (\ref{inftyphiGUE}) is seen to be true.
\end{itemize}
\end{proof}

\begin{remark}
Note that \eqref{inftyphiGUE} is identical with \eqref{inftyphi}. This result is mainly due to the relation between Hermite and Laguerre polynomials.
Indeed, monic Hermite polynomials $\{H_n(z)\}$ can be reduced to monic Laguerre polynomials $\{L_n^\alpha(z)\}$ by
$$H_{2n}(z)=L_n^{\left(-\frac{1}{2}\right)}(z^2),\qquad H_{2n+1}(z)=L_n^{\left(\frac{1}{2}\right)}(z^2).$$
Observe that $\alpha=\pm\frac{1}{2}$ corresponds to $\mu=0$ (see Section \ref{largen}). Write
\bean
z_L&:=&4n+4^{2/3}n^{1/3}z^*,\qquad t_L:=4n+4^{2/3}n^{1/3}s,\\
z&:=&\sqrt{2n}+\frac{z^*}{\sqrt{2}n^{1/6}},\qquad t:=\sqrt{2n}+\frac{s}{\sqrt{2}n^{1/6}}.
\eean
Replacing $n$ by $2n$ in $z$ and $t,$ we find
\bean
z^2&=&4n+4^{2/3}n^{1/3}z^*+\frac{(z^*)^2}{4^{2/3}n^{1/3}}\sim z_L,\\
t^2&=&4n+4^{2/3}n^{1/3}s+\frac{s^2}{4^{2/3}n^{1/3}}\sim t_L,
\eean
Here the symbol $\sim$ refers to the limiting procedure $n\rightarrow\infty.$
The above analysis suggests that \eqref{inftyphi} and \eqref{inftyphiGUE} are obtained by using the same scaling method.
\end{remark}

Introducing into (\ref{inftyphi}) the new variable $x$ defined by $x=-\frac{z^*-s}{r(s)}$
and the new function $f(x)=\phi(z^*),$ we obtain an equation in the form
\bea\label{inftyphipq}
f''(x)+p(x)f'(x)+q(x)f(x)=0,
\eea
where
\bean
p(x):=\frac{1}{x}-\frac{1}{x-1},\qquad q(x):=\frac{a_0}{x}+\frac{a_1}{x-1}+a_2+a_3x,
\eean
and
\bean
&a_0=-\frac{1}{4}r'(s)^2+\frac{1}{2}r'(s)+sr^2(s)-r^3(s),\\
&a_1=-\frac{1}{2}r'(s),\qquad a_2=-sr^2(s),\qquad a_3=r^3(s).
\eean
Note that the dependence on $s$ of $\{a_i\}$ is not displayed for ease of notations,
in addition, the Painlev$\acute{e}$ equation
$$a_0=-a_1^2-a_1-a_2-a_3.$$
For any interval $[c_0,x]$ excluding $0,1,$ and $\infty,$ by writing
$$f(x)=F(x)\exp\left(-\frac{1}{2}\int_{c_0}^x p(z)dz\right),$$
it follows from \eqref{inftyphipq} that
\bea\label{nofirstderi}
F''(x)+J(x)F(x)=0,
\eea
where
\bean
J(x)=\frac{1}{4x^2}+\frac{a_0-\frac{1}{2}}{x}-\frac{3}{4(x-1)^2}+\frac{a_1+\frac{1}{2}}{x-1}
+a_2+a_3x.
\eean
Since
\begin{equation}
\begin{aligned}
J(x)\cong
\begin{cases}
    \frac{1}{4x^2}+\frac{a_0-\frac{1}{2}}{x}+a_2-a_1-\frac{5}{4},       & \qquad\text{as} \quad x\rightarrow0,\\
    -\frac{3}{4(x-1)^2}+\frac{a_1+\frac{1}{2}}{x-1}-a_1^2-a_1-\frac{1}{4},  & \qquad \text{as } \quad x\rightarrow1,\\
    a_2+a_3 x,& \qquad \text{as } \quad x\rightarrow\infty,
\end{cases}
\end{aligned}
\end{equation}
according to \eqref{nofirstderi}, we have the following asymptotic formulas
\begin{equation}
\begin{aligned}
F(x)\cong
\begin{cases}
C_1\frac{\sqrt{2\lambda x}}{{\exp\left(\lambda x\right)}}
{{\rm M}\left(\frac{1}{2}-{\frac {a_0-\frac{1}{2}}{2\lambda }},1,2\lambda x\right)}+C_2\frac{\sqrt{2\lambda x}}{{\exp\left(\lambda x\right)}}
{{\rm U}\left(\frac{1}{2}-{\frac {a_0-\frac{1}{2}}{2\lambda }},1,2\lambda x\right)},       & \qquad\text{as} \quad x\rightarrow0,\bigskip\\
C_3{\frac {{\exp\left(\left(a_{{1}}+\frac{1}{2}\right)x\right)}}{
\sqrt {x-1}}}+C_4{\frac {\left(\left(a_{{1}}+\frac{1}{2}\right)x-a_1\right) \exp\left(-\left(a_{{1}}+\frac{1}{2}\right)x\right)}{\sqrt {x-1}}},  & \qquad \text{as} \quad x\rightarrow1,\bigskip\\
C_5{{\rm Ai}\left(-\frac{a_2}{\sqrt[3]{a_3^2}}-\sqrt[3]{a_3}x\right)}+C_{6}
{{\rm Bi}\left(-\frac{a_2}{\sqrt[3]{a_3^2}}-\sqrt[3]{a_3}x\right)},& \qquad \text{as} \quad x\rightarrow\infty,\non
\end{cases}
\end{aligned}
\end{equation}
where $\lambda :=\frac{1}{2}\sqrt{-4a_{{2}}+4a_{{1}}+5},$
${\rm M}(\mu,\nu,z)$ and ${\rm U}(\mu,\nu,z)$ are Kummer's confluent hypergeometric functions M and U respectively,
${\rm Ai}(z)$ and ${\rm Bi}(z)$ are the Airy functions of the first and second kind respectively, and $C_1$-$C_6$ are arbitrary constants.

We readily see that
$x=0$ is a regular singular point \cite{WhittakerWatson1927} for \eqref{inftyphipq} since $xp(x)$ and $x^2q(x)$ are analytic at 0.
From \eqref{inftyphipq} it follows that
\bea\label{eqn0}
D(f):=(x^2-x)f''(x)-f'(x)+\left(a_3x^3 +b_2x^2+ b_1x-a_0\right)f(x)=0,
\eea
where
$$
b_1=-\frac{1}{4}r'(s)^2+2sr^2(s)-r^3(s),\qquad b_2=-sr^2(s)-r^3(s).
$$
Note that
$$a_0+a_1-a_3-b_1-b_2=0.$$
Let $$Y_0=\sum\limits^{\infty}_{n=0}c_{0,n} x^{\tau+n},\qquad c_{0,0}=1,
$$
then
\bean
-xD(Y_0)=\tau^2x^\tau,
\eean
provided that for $n\geq0$
\bea\label{c0fiverecurrencetau}
\qquad (\tau+n+1)^2 c_{0,n+1}-((\tau+n-1)(\tau+n)-a_0)c_{0,n}-b_1 c_{0,n-1}-b_2 c_{0,n-2}-a_3 c_{0,n-3}=0.\quad
\eea
Here and in what follows $c_{0,-3}=c_{0,-2}=c_{0,-1}:=0.$
The two solutions of \eqref{eqn0} are given by
$$\left.Y_0\right|_{\tau=0}=\sum\limits^{\infty}_{n=0}c_{0,n} x^{n},
\qquad \left.\frac{\partial Y_0}{\partial \tau}\right|_{\tau=0}=\sum\limits^{\infty}_{n=1}d_{0,n} x^{n}+\ln x \cdot \left.Y_0\right|_{\tau=0}$$
where
$$d_{0,n}=\left.\frac{\partial c_{0,n}}{\partial \tau}\right|_{\tau=0}.$$
See \cite{Forsyth1956}.
Using relation \eqref{c0fiverecurrencetau}, we find $\{c_{0,n}\}$ with $n$ large appear in the following form
\bean
c_{0,n}=\sum\limits^{\infty}_{\ell=3}\frac{\theta_{0,\ell}}{n^{\ell}},\qquad\theta_{0,3}:=1,
\eean
where the coefficients $\{\theta_{0,\ell}\}_{\ell\geq1}$ are determined by
\begin{equation}\label{largencntau}
\begin{aligned}
\ell\theta_{0,\ell+3}=&\left(-\tau(1+2\ell)+\frac{1}{2}(\ell+1)\ell-a_1\right)\theta_{0,\ell+2}\\
&+\sum\limits^{\ell+1}_{k=3}\left\{(-1)^{\ell-k}\left[\binom{\ell+1}{k-3}-2\tau\binom{\ell+1}{k-2}+\tau^2\binom{\ell+1}{k-1}\right]\right.\\
&\left.\qquad\qquad-\left(b_1+2^{\ell-k+2}b_2+3^{\ell-k+2}a_3\right)\binom{\ell+1}{k-1}\right\}\theta_{0,k}.
\end{aligned}
\end{equation}
For $\ell=1$ the sum term is to be replaced by 0, so that
$$\theta_{0,4}=-3\tau+1-a_1.$$
In case $\tau=0,$ we obtain from \eqref{c0fiverecurrencetau}
\bean
\qquad (n+1)^2 c_{0,n+1}-(n(n-1)-a_0)c_{0,n}-b_1 c_{0,n-1}-b_2 c_{0,n-2}-a_3 c_{0,n-3}=0,\qquad n\geq0,
\eean
and from \eqref{largencntau}
\begin{equation}
\begin{aligned}
\ell\theta_{0,\ell+3}=&\left(\frac{1}{2}(\ell+1)\ell-a_1\right)\theta_{0,\ell+2}\\
&+\sum\limits^{\ell+1}_{k=3}\left\{(-1)^{\ell-k}\binom{\ell+1}{k-3}-\left(b_1+2^{\ell-k+2}b_2+3^{\ell-k+2}a_3\right)\binom{\ell+1}{k-1}\right\}\theta_{0,k},\qquad \ell\geq1.
\end{aligned}
\end{equation}
Differentiation of \eqref{c0fiverecurrencetau} and \eqref{largencntau} yields, respectively,
\bean
&&\frac{\partial c_{0,n+1}}{\partial \tau}(\tau+n+1)^2+2c_{0,n+1}(\tau+n+1)-c_{0,n}(2\tau+2n-1)\\
&&\qquad-\frac{\partial c_{0,n}}{\partial \tau}((\tau+n-1)(\tau+n)-a_0)-\frac{\partial c_{0,n-1}}{ \partial \tau}b_1-\frac{\partial c_{0,n-2}}{ \partial \tau}b_2-\frac{\partial c_{0,n-3}}{ \partial \tau}a_3=0,\qquad n\geq0,
\eean
and
\bean
\ell\frac{\partial\theta_{0,\ell+3}}{\partial\tau}&=&-(1+2\ell)\theta_{0,\ell+2}+\sum\limits^{\ell+1}_{k=3}(-1)^{\ell-k}\left[-2\binom{\ell+1}{k-2}+2\tau\binom{\ell+1}{k-1}\right]\theta_{0,k}\\
&&+\left(-\tau(1+2\ell)+\frac{1}{2}(\ell+1)\ell-a_1\right)\frac{\partial\theta_{0,\ell+2}}{\partial\tau}\\
&&+\sum\limits^{\ell+1}_{k=3}\left\{(-1)^{\ell-k}\left[\binom{\ell+1}{k-3}-2\tau\binom{\ell+1}{k-2}+\tau^2\binom{\ell+1}{k-1}\right]\right.\\
&&\left.\qquad\qquad-\left(b_1+2^{\ell-k+2}b_2+3^{\ell-k+2}a_3\right)\binom{\ell+1}{k-1}\right\}\frac{\partial\theta_{0,k}}{\partial\tau},\qquad \ell\geq1.
\eean
Setting $\tau=0$ in the above two formulas, we find
\bean
(n+1)^2 d_{0,n+1}&+&2(n+1)c_{0,n+1}-(2n-1)c_{0,n}\\
&-&(n(n-1)-a_0)d_{0,n}-b_1 d_{0,n-1}-b_2 d_{0,n-2}-a_3 d_{0,n-3}=0,\qquad n\geq0,
\eean
and, by denoting $\nu_{0,j}=\left.\frac{\partial\theta_{0,j}}{\partial \tau}\right|_{\tau=0},$
\begin{equation}\label{largecn'tau}
\begin{aligned}
\ell\nu_{0,\ell+3}=&-(1+2\ell)\theta_{0,\ell+2}-2\sum\limits^{\ell+1}_{k=3}(-1)^{\ell-k}\binom{\ell+1}{k-2}\theta_{0,k}+\left(\frac{1}{2}(\ell+1)\ell-a_1\right)\nu_{0,\ell+2},\\
&+\sum\limits^{\ell+1}_{k=3}\left\{(-1)^{\ell-k}\binom{\ell+1}{k-3}-\left(b_1+2^{\ell-k+2}b_2+3^{\ell-k+2}a_3\right)\binom{\ell+1}{k-1}\right\}\nu_{0,k},\qquad \ell\geq1.
\end{aligned}
\end{equation}
Observe that for large $n$
\bean
d_{0,n}=\sum\limits^{\infty}_{\ell=4}\frac{\nu_{0,\ell}}{n^\ell},\qquad\nu_{0,4}=-3.
\eean

Now we proceed to find out the solutions of \eqref{inftyphipq} in the form of Taylor series near $x=1$
which is a regular singular point since $(x-1)p(x)$ and $(x-1)^2q(x)$ are analytic at 0.
Equation \eqref{eqn0} can be rewritten as
\begin{equation}\label{x=1equation}
\begin{aligned}
D(f)&:=\left((x-1)^2+(x-1)\right)f''(x)-f'(x)+\left(a_3(x-1)^3+A(x-1)^2-a_1^2(x-1)+a_1\right)f(x)\\
&=0,
\end{aligned}
\end{equation}
where
$$A=-sr^2(s)+2r^3(s).$$
Suppose
\bean
Y_1(x)=\sum\limits^{\infty}_{n=0}c_{1,n} (x-1)^{\lambda+n},
\eean
then
\bean
(x-1)D(Y_1)=c_{1,0}\lambda(\lambda-2)(x-1)^{\lambda},
\eean
providing that
\begin{equation}\label{z=1c1c2c3}
\begin{aligned}
\left(\lambda^2-1\right)c_{1,1}+(a_1+\lambda(\lambda-1))c_{1,0}=0&,\\
\lambda(\lambda+2)c_{1,2}+(a_1+\lambda(\lambda+1))c_{1,1}-a_1^2c_{1,0}=0&,\\
(\lambda+1)(\lambda+3)c_{1,3}+(a_1+(\lambda+1)(\lambda+2))c_{1,2}-a_1^2c_{1,1}+Ac_{1,0}=0&,
\end{aligned}
\end{equation}
and for $n\geq3$
\bea\label{z=1cnlamda}
(\lambda+n-1)(\lambda+n+1)c_{1,n+1}+(a_1+(\lambda+n-1)(\lambda+n))c_{1,n}-a_1^2c_{1,n-1}+Ac_{1,n-2}+a_3c_{1,n-3}=0.\non\\
\eea
Taking $c_{1,0}=\frac{1}{2}\lambda,$ we have
\bean
(x-1)D(Y_1)=\frac{1}{2}\lambda^2(\lambda-2)(x-1)^{\lambda},
\eean
so that there are three solutions of \eqref{x=1equation} given by
$$\left.Y_1\right|_{\lambda=2},
\qquad \left.Y_1\right|_{\lambda=0},\qquad\left.\frac{\partial Y_1}{\partial \lambda}\right|_{\lambda=0}.$$
Given that $c_{1,0}=\frac{1}{2}\lambda,$ we obtain from \eqref{z=1c1c2c3}
\bean
\left.c_{1,n}\right|_{\lambda=0}=0,\qquad n=0,1,2,3,
\eean
so that by mathematical induction it follows from \eqref{z=1cnlamda} that
$$\left.c_{1,n}\right|_{\lambda=0}=0,\qquad n\geq0,$$
which implies
$$
\left.Y_1\right|_{\lambda=0}=0,\qquad
\left.\frac{\partial Y_1}{\partial \lambda}\right|_{\lambda=0}
=\sum\limits^{\infty}_{n=0}\left.\frac{\partial c_{1,n}}{\partial \lambda}\right|_{\lambda=0} (x-1)^{n}.
$$
Consequently, there are two solutions of \eqref{x=1equation} given by
$$\left.Y_1\right|_{\lambda=2}=\sum\limits^{\infty}_{n=0}\left(\left.c_{1,n}\right|_{\lambda=2}\right) (x-1)^{n+2},
\qquad 2\left.\frac{\partial Y_1}{\partial \lambda}\right|_{\lambda=0}=\sum\limits^{\infty}_{n=0}d_{1,n} (x-1)^{n},$$
where
$$d_{1,n}=2\left.\frac{\partial c_{1,n}}{\partial \lambda}\right|_{\lambda=0}.$$
Differentiation of \eqref{z=1cnlamda} gives
\begin{equation}\label{z=1cn'}
\begin{aligned}
&2(\lambda+n)c_{1,n+1}+(2(\lambda+n)-1)c_{1,n}+\left((\lambda+n)^2-1\right)\frac{\partial c_{1,n+1}}{\partial\lambda}\\
&+(a_1+(\lambda+n-1)(\lambda+n))\frac{\partial c_{1,n}}{\partial \lambda}-a_1^2\frac{\partial c_{1,n-1}}{\partial\lambda}
+A\frac{\partial c_{1,n-2}}{\partial\lambda}+a_3\frac{\partial c_{1,n-3}}{\partial\lambda}=0.
\end{aligned}
\end{equation}
In particular, for $\lambda=0:$
\bea
\left(n^2-1\right)d_{1,n+1}+(a_1+n(n-1))d_{1,n}-a_1^2d_{1,n-1}+Ad_{1,n-2}+a_3d_{1,n-3}=0.
\eea
Choosing $\lambda=2$ in \eqref{z=1cnlamda} leads to
\bean
(n+1)(n+3)c_{1,n+1}+(a_1+(n+1)(n+2))c_{1,n}-a_1^2c_{1,n-1}+Ac_{1,n-2}+a_3c_{1,n-3}=0.
\eean
When $n$ is large, we find from \eqref{z=1cnlamda}
\bean
c_{1,n}&=&(-1)^n\sum\limits^{\infty}_{\ell=1}\frac{\theta_{1,\ell}}{n^\ell},
\eean
where
\bean
\theta_{1,1}&:=&\frac{\lambda}{2},\non\\
\theta_{1,2}&=&\left(a_0-\lambda(\lambda-1)\right)\theta_{1,1},\non\\
2\theta_{1,3}&=&(a_0-3\lambda+2)\theta_{1,2}-(a_1^2+a_2+a_3)\theta_{1,1},\non\\
3\theta_{1,4}&=&\left(a_0+5(1-\lambda)\right)\theta_{1,3}-2\left(a_1^2+a_2+a_3+\lambda^2-3\lambda+1\right)\theta_{1,2}-(a_1^2+4a_2-a_3)\theta_{1,1},\\
\ell\theta_{1,\ell+1}&=&\left(a_0+\frac{\ell(\ell+1)}{2}-(2\ell-1)\lambda-1\right)\theta_{1,\ell}\\
&&+\sum\limits^{\ell-3}_{k=1}\left\{(-1)^{\ell-k}\left[\binom{\ell}{k}+(1-2\lambda)\binom{\ell-1}{k}+\lambda(\lambda-2)\binom{\ell-1}{k+1}\right]\right.\\
&&\left.\qquad\qquad-\binom{\ell-1}{k+1}\left(a_1^2+2^{\ell-k-2}A-3^{\ell-k-2}a_3\right)\right\}\theta_{1,k+2}\\
&&-(\ell-1)\left(a_1^2+2^{\ell-2}A-3^{\ell-2}a_3\right)\theta_{1,2}-(a_1^2+2^{\ell-1}A-3^{\ell-1}a_3)\theta_{1,1},\qquad\ell\geq4.
\eean
Recalling $d_{1,n}=2\left.\frac{\partial c_{1,n}}{\partial \lambda}\right|_{\lambda=0}$
and bearing in mind $\left.c_{1,n}\right|_{\lambda=0}=0$ which suggests $\left.\theta_{1,\ell}\right|_{\lambda=0}=0,$
we obtain for large $n,$
\bean
d_{1,n}&=&(-1)^n \sum\limits^{\infty}_{\ell=1}\frac{\nu_{1,\ell}}{n^\ell},
\eean
where
$
\nu_{1,\ell}=2\left.\frac{\partial \theta_{1,\ell}}{\partial \lambda}\right|_{\lambda=0}
$
and satisfy
\bean
\nu_{1,1}&=&1,\\
\nu_{1,2}&=&a_0,\non\\
2\nu_{1,3}&=&\left(a_0+2\right)\nu_{1,2}-(a_1^2+a_2+a_3),\\
3\nu_{1,4}&=&\left(a_0+5\right)\nu_{1,3}-2\left(a_1^2+a_2+a_3+1\right)\nu_{1,2}-(a_1^2+4a_2-a_3),\\
\ell\nu_{1,\ell+1}&=&\left(a_0+\frac{(\ell-1)(\ell+2)}{2}\right)\nu_{1,\ell}\\
&&+\sum\limits^{\ell-3}_{k=1}\left\{(-1)^{\ell-k}\left[\binom{\ell}{k}+\binom{\ell-1}{k}\right]-\binom{\ell-1}{k+1}\left(a_1^2+2^{\ell-k-2}A-3^{\ell-k-2}a_3\right)\right\}\nu_{1,k+2}\\
&&-(\ell-1)\left(a_1^2+2^{\ell-2}A-3^{\ell-2}a_3\right)\nu_{1,2}-(a_1^2+2^{\ell-1}A-3^{\ell-1}a_3),\qquad\ell\geq4.
\eean
In addition, for $\lambda=2,$
\bean
\theta_{1,1}&=&1,\non\\
\theta_{1,2}&=&a_0-2,\non\\
2\theta_{1,3}&=&\left(a_0-4\right)\theta_{1,2}-(a_1^2+a_2+a_3)\\
3\theta_{1,4}&=&\left(a_0-5\right)\theta_{1,3}-2\left(a_1^2+a_2+a_3-1\right)\theta_{1,2}-(a_1^2+4a_2-a_3),\\
\ell\theta_{1,\ell+1}&=&\left(a_0+\frac{\ell(\ell+1)}{2}-4\ell+1\right)\theta_{1,\ell}\\
&&+\sum\limits^{\ell-3}_{k=1}\left\{(-1)^{\ell-k}\left[\binom{\ell}{k}-3\binom{\ell-1}{k}\right]-\binom{\ell-1}{k+1}\left(a_1^2+2^{\ell-k-2}A-3^{\ell-k-2}a_3\right)\right\}\theta_{1,k+2}\\
&&-(\ell-1)\left(a_1^2+2^{\ell-2}A-3^{\ell-2}a_3\right)\theta_{1,2}-(a_1^2+2^{\ell-1}A-3^{\ell-1}a_3),\qquad\ell\geq4.
\eean

\section{Chazy's equations}
Noticing that $r_n(t)=\sigma_n'(t)$ is valid for both our problem and the largest eigenvalue distribution of GUE (see \cite{ChenPruessner2005}),
we can find the ODE for $r_n(t)$ from the $\sigma$-form of a particular $P_V$ satisfied by $\sigma_n(t).$
In general, we can develop the ODE for $\varrho(t):=\Xi'(t)$
with $\Xi(t)$ satisfying the $\sigma$-form of a general $P_{V}.$
The extension of these considerations to $P_{IV}$ is not difficult.

\begin{subsection}{Chazy's equation for the $\sigma$-form of $P_V$}
Recall the J-M-O $\sigma$-form of $P_V$ \cite{Jimbo1981}
\bea\label{JMOsigmaPV}
(t\Xi '')^2=\left(\Xi-t \Xi
   '+2 (\Xi ')^2+\left(\nu _1+\nu _2+\nu _3\right) \Xi '\right){}^2-4 \Xi ' \left(\Xi '+\nu _1\right) \left(\Xi
   '+\nu _2\right) \left(\Xi '+\nu _3\right),
\eea
where $\Xi=\Xi(t)$  and $\nu_1,~\nu_2,~\nu_3$ are constants. Here we use the new notation $\Xi$ instead of $\sigma.$
To get the ODE for $\varrho(t):=\Xi'(t),$ we need to eliminate $\Xi$ in (\ref{JMOsigmaPV}).
Differentiating both sides of (\ref{JMOsigmaPV}),
solving for $\Xi(t)$ and plugging it back into (\ref{JMOsigmaPV}), we obtain
\bea\label{oderngeneral}
&&\left(t \left(\varrho'+t \varrho''\right)+8
   \varrho^3+6 \left(\nu _1+\nu
   _2+\nu _3\right) \varrho^2+4 \left(\nu_1\nu_2+\nu_1\nu_3+\nu _2 \nu _3\right) \varrho+2 \nu _1 \nu _2 \nu _3\right){}^2\nonumber\\
&=&\left(4 \varrho+\nu _1+\nu _2+\nu _3-t\right){}^2 \nonumber\\
&&\cdot\left(t^2 (\varrho')^2+4 \varrho^4+4 \left(\nu _1+\nu _2+\nu _3\right)
   \varrho^3+4 \left(\nu_1\nu_2+\nu_1\nu_3+\nu _2 \nu _3\right) \varrho^2+4 \nu _1 \nu _2 \nu _3\varrho\right).
\eea
With the change of variables,
$$
t = 2 i e^z,\qquad
\vartheta(z)=-2i\varrho(t)-\frac{i}{2} \left(\nu_1+\nu _2+\nu_3\right),\nonumber
$$
where $i^{2}=-1,$ we see that $\vartheta(z)$ satisfies
\begin{subequations}\label{ourChazyLUE}
\begin{align}\label{ChazyII2equation}
\left(\frac{d^2 \vartheta}{d z^2}-2 \vartheta^3-\alpha_{1} \vartheta-\beta_{1}\right)^2=-4
   \left(\vartheta-e^z\right)^2 \left(\left(\frac{d \vartheta}{dz}\right)^2-\vartheta^4-\alpha_{1} \vartheta^2-2\beta_{1}\vartheta-\gamma_{1}\right),
\end{align}
recognized to be the second member of the Chazy II system \cite{Cosgrove2006} with
\begin{equation}\label{ChazyII2parameters}
\begin{aligned}
\alpha_1&=\frac{1}{2} \left(3 \nu _1^2+3 \nu _2^2+3 \nu _3^2-2 \nu _1\nu _2-2 \nu _1\nu _3-2 \nu _2 \nu _3\right),\\
\beta_1&=\frac{i}{2} \left(\nu _1-\nu _2-\nu _3\right) \left(\nu _1+\nu _2-\nu _3\right) \left(\nu _1-\nu _2+\nu
   _3\right),\\
\gamma_1&=-\frac{1}{16} \left(\nu _1+\nu _2-3 \nu _3\right) \left(3 \nu _1-\nu _2-\nu _3\right) \left(\nu _1-3 \nu
   _2+\nu _3\right) \left(\nu _1+\nu _2+\nu _3\right).
\end{aligned}
\end{equation}
\end{subequations}

Now we proceed to apply our results to different ensembles.

\begin{example}
For our problem, we have
$$\Xi(t)=\sigma_n(t),\qquad\varrho(t)=r_n(t).$$
Substituting (\ref{paralargest}) for $\nu_1,~\nu_2$ and $\nu_3$ in (\ref{oderngeneral}) and (\ref{ourChazyLUE})
gives rise to, respectively, the ODE for $r_n(t)$
\bea\label{odern}
&&\left(t \left(r_n'+t r_n''\right)+8
   r_n^3+6 \left(2n+\alpha\right) r_n^2+4n(n+\alpha) r_n\right){}^2\nonumber\\
&&=\left(4 r_n+2n+\alpha-t\right){}^2\left(t^2 (r_n')^2+4 r_n^4+4 \left(2n+\alpha\right)
   r_n^3+4 n(n+\alpha) r_n^2\right),
\eea
and the Chazy equation (\ref{ChazyII2equation}) for
$$\vartheta(z)=-2i\left.r_n(t)\right|_{t = 2 i e^z}-\frac{i}{2} \left(2n+\alpha\right)$$
with
\begin{equation}\label{ourparameters}
\begin{aligned}
\alpha_{1}=\frac{1}{2} \left(3 \alpha ^2+4 n^2+4 \alpha n\right),
\qquad\beta_{1}=\frac{i}{2} \alpha ^2 (\alpha +2 n),
\qquad\gamma_{1}=\frac{1}{16} (2n-\alpha) (\alpha +2 n)^2 (3\alpha +2 n).\non
\end{aligned}
\end{equation}
\begin{remark}\label{secondChazyderi}
By means of the Riccati equations for $R_n$ and $r_n,$ we can derive the ODE for $r_n$ in an alternative way which is straightforward but tedious.
Solving for $R_{n}$ from (\ref{Riccattirn}) yields
\bean
R_n=\frac{t  r_n'+2
   r_n^2\pm\sqrt{\Delta_n}}{2 \left(t
   r_n'-(2n+\alpha)  r_n -n(n+\alpha)\right)},
\eean
with $\Delta_n:=4 n (\alpha +n) r_n^2+4 (\alpha +2 n) r_n^3+4 r_n^4+(t r_n')^2.$
Choosing either sign, substituting the resulting $R_n$ into (\ref{RiccatiRn})
and clearing the square root, we obtain (\ref{odern}) again.
\end{remark}
\end{example}

\begin{example}
The outage probability of a single-user MIMO communication system can be calculated via
the moment generating function $\mathcal{M}(\lambda)$ of the mutual information.
Under some assumptions, see \cite{BasorChen2014} and \cite{ChenMcKay2012}, $\mathcal{M}(\lambda)$ is shown to be
$$\mathcal{M}(\lambda)=t^{-n\lambda}\frac{D_n(t,\lambda)}{D_n(t,0)}=\exp\left(\int_\infty^t\frac{H_n(x)-n\lambda}{x}\right)dx,$$
where
\bean
D_n(t,\lambda)&=&\det\left(\int_0^\infty x^{i+j}w_{\rm dlag}(x,t)dx\right)_{i,j=0}^{n-1},\non\\
H_n(t)&:=&t\frac{d}{dt}\ln D_n(t,\lambda),\non
\eean
and
$$w_{\rm dlag}(x,t)=(x+t)^\lambda x^\alpha e^{-x}.$$
Moreover, the $\sigma$-form of $P_V$ with
\bea\label{ex1P5parameters}
\nu_1=\lambda,\qquad\nu_2=-n,\qquad\nu_3=-n-\alpha
\eea
is established for
$$\Xi(t):=H_n(t)-n\lambda,$$
and the following is established
$$r_n(t):=\frac{\lambda}{h_{n-1}}\int_0^\infty \frac{P_n(x)P_{n-1}(x)}{x+t}w_{\rm dlag}(x,t)dx=-\Xi'(t),$$
where
\bea
\int_{0}^{\infty}P_m(x)P_{n}(x)w_{\rm dlag}(x,t)dx= h_n\delta_{mn},\qquad m\geq0,\quad n\geq0.\non
\eea
Plugging (\ref{ex1P5parameters}) into (\ref{oderngeneral}) leads to the ODE for
$$\varrho(t)=-r_n(t),$$
from which follows the Chazy's equation (\ref{ChazyII2equation}) for
$$\vartheta(z)=2i\left.r_n(t)\right|_{t = 2 i e^z}+\frac{i}{2} \left(2n+\alpha-\lambda\right)$$
with
\begin{equation}
\begin{split}
\alpha _1&=\frac{1}{2} \left(4 n^2+4 n \alpha+3 \alpha ^2+4n\lambda+2 \alpha  \lambda +3\lambda ^2\right),\notag\\
\beta _1&=-\frac{i}{2} (\alpha-\lambda )(\alpha+\lambda) (2 n+\alpha+\lambda ),\notag\\
\gamma _1&=\frac{1}{16} (2n+\alpha -\lambda) (2n-\alpha +\lambda) (2n+3\alpha +\lambda) (2n+\alpha+3 \lambda).\notag
\end{split}
\end{equation}
\end{example}

\begin{example} Denote by $D_n(t)$ the determinant of the Hankel matrix generated from the moments of the time-dependent Jacobi weight
$$w(x,t)=(1-x)^\alpha(1+x)^\beta e^{-tx},\qquad -1\leq x\leq1,\quad t\in\mathbb{R},$$
namely,
$$D_n(t)=\det\left(\int_{-1}^1 x^{i+j}w(x,t)dx\right)_{i,j=0}^{n-1}.$$
It is proved in \cite{BasorChenEhrhardt2010} that
$$\Xi(t):=t\frac{d}{dt}\ln D_n(t/2)-\frac{nt}{2}+n(n+\beta)$$
satisfies the $\sigma$-form of $P_V$ with
\bea\label{PVtimedependentcoef}
\nu_1=-\alpha,\qquad\nu_2=n,\qquad\nu_3=n+\beta.
\eea
Furthemore,
$$\Xi'(t)=-r_n(t/2),$$
where $r_n(t)$ is defined by
$$r_n(t):=\frac{\alpha}{h_{n-1}}\int_{-1}^1 \frac{P_n(x)P_{n-1}(x)}{1-x}w(x,t)dx,$$
with
\bea
\int_{-1}^{1}P_m(x)P_{n}(x)w(x,t)dx= h_n\delta_{mn},\qquad m\geq0,\quad n\geq0.\non
\eea
Substitution of (\ref{PVtimedependentcoef}) in (\ref{oderngeneral}) gives the ODE for
$$\varrho(t)=-r_n(t/2),$$
from which we obtain (\ref{ChazyII2equation}) for
$$\vartheta(z)=2i\left.r_n(t)\right|_{t = i e^z}-\frac{i}{2} \left(2n-\alpha+\beta\right)$$
with
\begin{equation}
\begin{split}
\alpha _1&=\frac{1}{2} \left(4 n^2+4 n \alpha+3 \alpha ^2+4n\beta+2 \alpha\beta +3 \beta ^2\right),\\
\beta _1&=-\frac{i}{2} (\alpha -\beta ) (\alpha+\beta ) (2n+\alpha +\beta),\\
\gamma _1&=\frac{1}{16} (2n+\alpha -\beta) (2n-\alpha+\beta) (2n+3 \alpha +\beta)(2n+\alpha +3 \beta).\nonumber
\end{split}
\end{equation}
\end{example}
\begin{example}
The weight
$$w(x,t)=e^{-t/x}x^\alpha(1-x)^\beta,\qquad 0\leq x\leq1,\quad t\geq0$$
is studied in \cite{ChenDai2010} and the J-M-O $\sigma$-form of $P_V$ is found for
$$\Xi(t):=t\frac{d}{dt}\ln D_n(t)-n(n+\alpha+\beta)$$
with
\bea\label{PVJacobicoef}
\nu_1=-(n+\alpha+\beta),\qquad\nu_2=n,\qquad\nu_3=-\beta.
\eea
Here again $D_n(t)$ is the Hankel determinant
$$D_n(t)=\det\left(\int_{0}^1 x^{i+j}w(x,t)dx\right)_{i,j=0}^{n-1}.$$
Take into account the following result in \cite{ChenDai2010}
$$\Xi'(t)=(2n+\alpha+\beta)\frac{r_n^*(t)}{t}-n$$
where
$$r_n^*(t):=\frac{t}{h_{n-1}}\int_0^1\frac{P_n(x)P_{n-1}(x)}{x}w(x,t)dx$$
with
$$
\int_{0}^{1}P_m(x)P_{n}(x)w(x,t)dx= h_n\delta_{mn},\qquad m\geq0,\quad n\geq0,
$$
then, with the aid of (\ref{PVJacobicoef}), we find the ODE from (\ref{oderngeneral}) for
$$\varrho(t)=(2n+\alpha+\beta)\frac{r_n^*(t)}{t}-n,$$
so that (\ref{ChazyII2equation}) holds for
$$\vartheta(z)=-2i\left.\varrho(t)\right|_{t = 2 i e^z}+\frac{i}{2} \left(\alpha+2\beta\right)$$
with
\begin{equation}
\begin{split}
\alpha _1&=\frac{1}{2} \left(8 n^2+8 n \alpha+3 \alpha ^2+8 n \beta +4 \alpha  \beta +4\beta ^2\right),\notag\\
\beta _1&=-\frac{i}{2} \alpha  (2n+\alpha) (2n+\alpha +2 \beta),\notag\\
\gamma _1&=-\frac{1}{16} (\alpha -2 \beta )(\alpha +2 \beta ) (4n+\alpha +2\beta) (4 n +3 \alpha +2\beta).\notag
\end{split}
\end{equation}
\end{example}
\end{subsection}

\begin{subsection}{Chazy's equation for the $\sigma$-form of $P_{IV}$}

We go ahead to deal with the $\sigma$-form of $P_{IV}$ in the spirit of the previous section.
The J-M-O $\sigma$-form of $P_{IV}$ \cite{Jimbo1981} is given by
\bea
(\Xi '')^2 =4\left(t \Xi '-\Xi\right)^2 -4 \Xi ' \left(\Xi '+\nu_1\right)\left(\Xi'+\nu _2\right),\nonumber
\eea
where $\Xi=\Xi(t)$  and $\nu_1,~\nu_2$ are constants. From this equation follows the ODE for $\varrho(t):=\Xi'(t),$
\bea\label{generalChazy1}
\left(\varrho''+6 \varrho^2+4\left(\nu _1+\nu _2\right)\varrho+2\nu _1 \nu _2\right){}^2 =4 t^2 \left((\varrho')^2+4\varrho^3+4(\nu_1+\nu_2)\varrho^2+ 4\nu_1\nu_2\varrho\right).
\eea
Setting
$$
\vartheta(z)=-\frac{1}{2}\left.\varrho(t)\right|_{t=\frac{z}{\sqrt{2}}}-\frac{1}{6} \left(\nu _1+\nu_2\right),$$
%
we find
\begin{subequations}\label{ChazyII1}
\begin{equation}\label{generalChazyII1}
\left(\vartheta''(z)-6 \vartheta(z)^2-\alpha_1\right){}^2=z^2 \left( \vartheta'(z)^2-4 \vartheta(z)^3-2\alpha_1 \vartheta(z)-\beta_1\right),
\end{equation}
which is the first member of the Chazy II system \cite{Cosgrove2006} with
\begin{equation}\label{ourparameters1}
\begin{aligned}
\alpha_1=\frac{1}{6}\left(-\nu _1^2+\nu_1\nu _2-\nu _2^2\right),\qquad\beta _1= -\frac{1}{54}\left(\nu _1-2 \nu _2\right)\left(2 \nu _1-\nu _2\right)\left(\nu _1+\nu _2\right).
\end{aligned}
\end{equation}
\end{subequations}

Now we intend to show the application of the above result.
\begin{example}\label{largestGUE}
Regarding the deformed Hermite weight with one jump,
$$w(x;t)=e^{-x^2}\left(1-\frac{\beta}{2}+\beta\theta(x-t)\right)=\begin{cases}
   \left(1+\frac{\beta}{2}\right)e^{-x^2},       & \quad \text{if } x>t\\
   \left(1-\frac{\beta}{2}\right)e^{-x^2},  & \quad \text{if }  x\leq t\\
\end{cases},$$
the following formulas are established in \cite{ChenPruessner2005} (The $\tilde{x}$ there has been repalced here with $t$),
\begin{equation}\label{formulasknown}
\begin{aligned}
r_n^2(t)&=2(n+r_n(t))\alpha_n\alpha_{n-1},\\
r_n'(t)&=2(n+r_n(t))(\alpha_{n-1}-\alpha_n),\\
\frac{d}{dt}\ln D_n(t)&=2tr_n(t)-2(n+r_n(t))(\alpha_n+\alpha_{n-1}),\\
\frac{d^2}{dt^2}\ln D_n(t)&=2r_n(t).
\end{aligned}
\end{equation}
Recall here that
\bean
r_n(t)&=&\beta\frac{P_{n}(t,t)P_{n-1}(t,t)}{h_{n-1}}e^{-t^2},\\
D_n(t)&=&\det\left(\int_{-\infty}^\infty x^{i+j}w(x;t)dx\right)_{i,j=0}^{n-1},
\eean
with
\bean
h_n\delta_{mn}&=&\int_{-\infty}^{\infty}P_m(x)P_{n}(x)w(x;t)dx,\\
zP_n(z)&=&P_{n+1}(z)+\alpha_nP_n(z)+\beta_nP_{n-1}(z).
\eean
From (\ref{formulasknown}), we can derive the $\sigma$-form of $P_{IV}$
for
$$ \Xi(t):=\frac{d}{dt}\ln D_n(t)$$
with
\bea\label{exlargestGUEparameter}
\nu_1=0,\qquad\nu_2=2n.
\eea
Hence the the ODE for
$$\varrho(t)=r_n(t)$$
follows immediately from (\ref{generalChazy1})
and, as a consequence, the Chazy's equation (\ref{generalChazyII1}) holds for
$$v(z)=-\frac{1}{2}\left.r_n(t)\right|_{t=\frac{z}{\sqrt{2}}}-\frac{n}{3}$$
with
\bea\label{exlargestGUEChazypara}
\alpha_1= -\frac{2}{3}n^2,\qquad\beta _1=-\frac{8}{27}n^3.
\eea
\begin{remark}
Denote by $\mathbb{P}_{\max}(n,t)$
the largest eigenvalue distribution of GUE on $(0,t)$ and by $\mathbb{P}_{\min}(n,t)$ the smallest one on $(t,\infty).$
Then we readily see in this last example that
$$\Xi(t)=\frac{d}{dt}\ln\mathbb{P}_{\max}(n,t)\qquad\text{or}\qquad-\frac{d}{dt}\ln\mathbb{P}_{\min}(n,t)$$
according as
$$\beta=-2\qquad \text{or}\qquad2.$$
Indeed, for instance, for $\beta=-2,$ we have
\bean
\bar{D}_n(t)&:=&\det\left(\int_{-\infty}^t x^{i+j}e^{-x^2}dx\right)_{i,j=0}^{n-1}=\frac{D_n(t)}{2^n},\\
\frac{\bar{D}_n(t)}{\bar{D}_n(\infty)}&=&\mathbb{P}_{\max}(n,t),
\eean
so that
$$
\frac{d}{dt}\ln\mathbb{P}_{\max}(n,t)=\frac{d}{dt}\ln \bar{D}_n(t)=\frac{d}{dt}\ln D_n(t)=\Xi(t).
$$
In addition, we notice that
$$
r_n(t)=-\frac{P_{n}(t,t)P_{n-1}(t,t)}{\hbar_{n-1}}e^{-t^2},$$
with
$$
\int_{-\infty}^{t}P_m(x)P_{n}(x)e^{-x^2}dx=\frac{1}{2}h_n\delta_{mn}=:\hbar_n\delta_{mn}.$$
\end{remark}
\end{example}

\begin{example}
The $\sigma$-form of $P_{IV}$ with parameters given by (\ref{exlargestGUEparameter}) is also satified by
$$\Xi(t):=\frac{d}{dt}\ln\mathbb{P}(n,t),$$
where $\mathbb{P}(n,t)$ is the probability that $(-t,t)$ is free of eigenvalues in GUE.
See \cite{CaoChenGriffin2014} and \cite{Tracy1994+}.
Since
$$r_n(t):=2\frac{P_{n}(t,t)P_{n-1}(t,t)}{h_{n-1}}e^{-t^2}=\frac{\Xi'(t)}{2}=\frac{\varrho(t)}{2},$$
equation (\ref{generalChazyII1}) is valid for
$$v(z)=-\left.\varrho(t)\right|_{t=\frac{z}{\sqrt{2}}}-\frac{n}{3}$$
with parameters given by (\ref{exlargestGUEChazypara}).
This result is in agreement with the one in \cite{CaoChenGriffin2014} found by using the Riccati equations for $r_n$ and $R_n.$
\end{example}
\end{subsection}

\begin{acknowledgement} 
The financial support of the Macau Science and Technology Development Fund under grant number FDCT 077/2012/A3, FDCT 130/2014/A3
is gratefully acknowledged. We also like to thank the University of Macau for generous support: MYRG 2014--00011 FST, MYRG 2014--00004 FST. 
\end{acknowledgement}

\end{document}